\documentclass[superscriptaddress]{revtex4-2}
\usepackage{graphicx} 
\usepackage{wrapfig}
 \usepackage[T1]{fontenc}
\usepackage[utf8]{inputenc}
\usepackage{physics}
\usepackage{bbold}
\usepackage{amsmath,amsthm}
\usepackage{amssymb}
\usepackage[a4paper, total={6in, 8in}]{geometry}
\usepackage[table,xcdraw]{xcolor}
\usepackage{caption}
\usepackage{subcaption}
\usepackage{setspace}
\usepackage{verbatim}
\usepackage[rightcaption]{sidecap}
\usepackage{bm,bbm}
\usepackage{tikz}
\usetikzlibrary{decorations.pathmorphing}
\usepackage{circuitikz}
\usepgflibrary{patterns}
\usetikzlibrary {patterns,patterns.meta} 
\usepackage{mathtools}
\usepackage{nicefrac}

\usepackage{enumitem}
\usepackage{hyperref}

\usepackage{color}

\newcommand{\h}{\mathcal{H}}
\newcommand{\B}{\mathcal{B}}
\newcommand{\s}{\mathcal{S}}
\newcommand{\R}{\mathbb{R}}
\newcommand{\C}{\mathcal{C}}

\newcommand{\bone}{\mathbbm{1}}	
\newcommand{\E}{\mathcal{E}}
\newcommand{\vc}{\vcentcolon =}
\newcommand{\F}{\mathcal{F}}
\renewcommand{\P}{\mathcal{P}}

\newcommand{\dt}{\partial}

\usepackage{color}
\usepackage{xcolor}
\usepackage{framed}

\definecolor{shadecolor}{gray}{0.8}
\definecolor{lgray}{gray}{0.5}

\newcounter{mnotecount}[section]
\renewcommand{\themnotecount}{\thesection.\arabic{mnotecount}}
\newcommand{\mnote}[1]
{\protect{\stepcounter{mnotecount}}$^{\mbox{\footnotesize
$
\bullet$\themnotecount}}$ \marginpar{
\raggedright\tiny\em
$\!\!\!\!\!\!\,\bullet$\themnotecount: #1} }

\begin{document}

\newtheorem{prop}{Proposition}
\newtheorem{theorem}{Theorem}
\newtheorem{corollary}{Corollary}[theorem]
\newtheorem{lemma}[theorem]{Lemma}
\newtheorem{remark}{Remark}

\title{Superluminal signalling and chaos in nonlinear quantum dynamics}

\author{Marta Emilia Bieli\'nska}
\email{marta.bielinska@oriel.ox.ac.uk}
\affiliation{Oriel College, University of Oxford, 
Oriel Square, Oxford OX1 4EW, UK}
\affiliation{International  Centre  for  Theory  of  Quantum  Technologies,  University  of  Gda\'nsk,  Wita  Stwosza  63,  80-308  Gda\'nsk,  Poland}

\author{Micha\l\ Eckstein}
\email{michal.eckstein@uj.edu.pl}
\affiliation{Institute of Theoretical Physics, Jagiellonian University, {\L}ojasiewicza 11, 30-348 Krak\'ow, Poland}
\affiliation{International  Centre  for  Theory  of  Quantum  Technologies,  University  of  Gda\'nsk,  Wita  Stwosza  63,  80-308  Gda\'nsk,  Poland}

\author{Pawe\l\ Horodecki}
\affiliation{International  Centre  for  Theory  of  Quantum  Technologies,  University  of  Gda\'nsk,  Wita  Stwosza  63,  80-308  Gda\'nsk,  Poland}
\affiliation{Faculty of Applied Physics and Mathematics, National Quantum Information Centre,
Gda\'nsk University of Technology, Gabriela Narutowicza 11/12, 80-233 Gda\'nsk, Poland}

\date{\today}

\maketitle 

\spacing{1.2}

\section*{Abstract}
Nonlinear quantum dynamics is often invoked in models trying to bridge the gap between the quantum micro-world and the classical macro-world. Such endeavors, however, encounter challenges at the nexus with relativity. In 1989 Nicolas Gisin proved a powerful no-go theorem, according to which nonlinear quantum dynamics would lead to superluminal signalling, violating Einstein’s causality.
Here we analyse the theorem from the perspective of recent developments. First, we observe that it harmonises with the no-restriction hypothesis from General Probabilistic Theories. Second, we note that it requires a suitable synchronisation of Alice's and Bob's clocks and actions. Next, we argue that it does not automatically exclude the possibility of global nonlinear quantum dynamics on a tensor product Hilbert space. Consequently, we investigate a class of such dynamics inspired by discrete analogues of nonlinear Schr\"odinger equations. We show that, in general, they exhibit a chaotic character. In this context we inspect whether superluminal signalling can be avoided by relaxing the no-restriction hypothesis. We study three possible communication protocols involving either local measurements or modifications of a local Hamiltonian. We conclude that, in general, in all three cases, two spacelike separated parties can effectuate statistical superluminal information transfer. Nevertheless, we show an example of a nonlocal nonlinear quantum dynamics, which does not allow for it, provided that we relax the no-restriction hypothesis.

\newpage
\tableofcontents

\newpage
\section{Introduction}

Quantum mechanics has a distinctive mathematical feature --- it is a \textit{linear} theory. This contrasts with classical mechanics, which is based on nonlinear structures within the domain of differential geometry. The linearity of quantum mechanics is manifest at both the kinematic and dynamical levels. The former pertains to the fact that the spaces of quantum states and observables are linear topological spaces. The latter means that the fundamental evolution equations --- Schr\"odinger, Heisenberg, von Neumann or Gorini--Kossakowski--Sudarshan--Lindblad \cite{GKS,Lindblad} --- are linear partial differential equations. 

The linearity of the quantum formalism is based on the superposition principle, which has been questioned from different standpoints. A programme for establishing a causal nonlinear wave mechanics has already been outlined by de Broglie as a continuation of his pilot-wave formalism \cite{deBroglie1960}. The first concrete equation involving a logarithmic nonlinearity was proposed by Białynicki-Birula and Mycielski \cite{IBB_nonlinear}. Another version of `nonlinear quantum mechanics' was put forward by Weinberg \cite{WeinbergNQM,WeinbergNQM2}. The common motivation behind these models was to describe a mesoscopic middle ground between the quantum micro-world and the classical macro-world. Similar ideas underlie the models of Karolyhazy \cite{Karolyhazy}, Di'osi \cite{Diosi87}, and Penrose \cite{Penrose96}, which postulate the mediation of gravity in the quantum-to-classical transition. These models belong to a wide class of wave function collapse models \cite{OR}, which are being tested in modern experiments \cite{Superpos2019,Diosi21,collapse22}.

The central difficulty with nonlinear quantum dynamics is that it appears to be incompatible with relativity. More precisely, Simon, Bu\v{z}ek, and Gisin have proven \cite{Gisin1989,Gisin2001} that the joint assumptions of the static structure of quantum mechanics and the absence of superluminal signaling imply that the dynamics of quantum states must be described by a completely positive, trace-preserving linear map.

A way around this problem, commonly adopted in modern wave function collapse models \cite{OR}, is to introduce some stochastic element, which results in linear dynamics of density matrices and thus prevents causality violation \cite{Bassi_NS}. A universal prediction of these models is the existence of a ``collapse noise'', which leads to the heating of the system \cite{collapse22}. This predicted effect allowed for setting rather stringent bounds on various stochastic collapse models \cite{Diosi21,collapse22}. 

There also exist several specific models of deterministic nonlinear quantum dynamics that circumvent the problem of superluminal signalling. Typically, such models introduce some modification of the collapse postulate and/or the Born rule \cite{Czachor1998,CzachorMarcin,Czachor02,Kent05,Helou17}. More recently, Rembieli\'nski and Caban \cite{Caban20,Caban21} have shown that Gisin's argument does not rule out a class of so-called `convex quasilinear' dynamical maps, which include, i.a., the selective measurement map. Another deterministic nonlinear variant of quantum dynamics was proposed in \cite{Kaplan2022}. The latter framework involves a modification at the level of quantum field theory and postulates that no collapse actually takes place.

\medskip

The goals of this paper are twofold:

First, we aim to elucidate the no-go theorem on nonlinear quantum dynamics presented in \cite{Gisin2001} by carefully identifying all assumptions underlying its derivation. In particular, the theorem presumes that the signalling party (Alice) is free to choose any local POVM and that the receiver (Bob) can reconstruct his local state with arbitrary precision. In the context of General Probabilistic Theories \cite{GPT_review}, these assumptions fall under the `no-restriction hypothesis' \cite{Chiribella10,No_restrictions}. We also note that while Gisin's argument does not require the von Neumann state-update postulate, it does employ a rather non-trivial assumption about the spacetime aspect of the collapse.

We further argue that the no-go theorem from \cite{Gisin2001} assumes that the studied dynamical map is `local' (cf. \cite{Bell_Nonlocal}), in the sense that it acts exclusively on the Hilbert space associated with the subsystem of Alice or Bob. Moreover, the map is defined only on pure states and there is no natural extension to mixed states. This fact was exploited in \cite{Caban20,Caban21} where the authors provided an explicit example of a nonlinear, yet non-signalling, local dynamics of \textit{mixed} quantum states. 

Secondly, we aim to explore the possibility of a `nonlocal' nonlinear quantum dynamics that does not facilitate operational superluminal signalling. To this end, we analyse a class of dynamical equations on a finite-dimensional bipartite Hilbert space, inspired by some popular effective nonlinear Schr\"odinger equations, such as the Gross–-Pitaevskii equation, which is widely applied in Bose–-Einstein condensate \cite{BEC_review}. Such finite-dimensional nonlinear evolution equations for single systems were studied in \cite{MeyerWong13,MeyerWong14,Childs} in the context of quantum information processing. In particular, in \cite{Childs} it was demonstrated that nonlinear dynamics of a single qubit leads to an exponential increase in distinguishability between two states with a large initial overlap. Here, we illustrate the same phenomenon in a system of two coupled qubits. We also show that the dynamics of entanglement is not monotonic and exhibits a chaotic character.

Then, we consider three protocols that Alice and Bob could employ to establish superluminal communication with the help of nonlocal nonlinear quantum dynamics. The first one involves a choice of measurement basis, as in the protocol employed in Gisin's original argument \cite{Gisin1989}. In the second one, Alice can only make a measurement in a single distinguished basis, but she is free to decide whether to make the measurement or not. In the third protocol Alice exploits the possibility of freely modifying the local dynamics of her particle (i.e., quantum-information-carrier), without making any measurement. We find that in all three cases superluminal communication is, in general, possible. Furthermore, we show that, due to the chaotic character of the dynamics, a tiny change on Alice's side typically quickly develops into a sizeable effect on Bob's side.

Nevertheless, we demonstrate that there exist instances of nonlocal nonlinear quantum dynamics which does allow for signalling, neither through projective measurements nor through local modification of dynamics. This comes at the cost of a radical violation of the no-restriction hypothesis --- the measurements can only be performed in the basis distinguished by the dynamics.

\medskip

The article is organised as follows: In Section \ref{sec:2}, we scrutinise Gisin's no-go theorem based on \cite{Gisin2001}. We argue that its applicability is limited by the notorious problem of extending dynamical maps to composite systems. In Section \ref{sec:3}, we present a general class of norm-preserving nonlinear dynamics on a finite-dimensional Hilbert space. Such dynamics is always nonlocal and introduces a preferred basis. We consider a bipartite scenario with spacelike separated Alice and Bob and analyse the spacetime diagram associated with it. We also present some analytic solutions to the dynamical equations for a special case involving diagonal Hamiltonians. Then, in Section \ref{sec:4}, we focus on the simplest two-qubit system driven by a discrete analogue of the Gross–-Pitaevskii equation. We illustrate the general chaotic character of the dynamics. We do so by studying the time evolution of the overlap between two states, the entanglement quantified by the concurrence, and the dynamics of Bob's reduced state in the Bloch ball. Section \ref{sec:sign} contains the analysis of three different superluminal communication protocols. In the first two, we assume the standard von Neumann collapse postulate, while in the last, Alice encodes information in the system by modifying the parameters of a local Hamiltonian.

Finally, in Section \ref{sec:no}, we discuss special cases of nonlocal nonlinear quantum dynamics that do not facilitate superluminal signalling through local projective measurements and modifications of dynamics. They are rather heavily restricted by the requirement that both the Hamiltonian part and the measurements be diagonal in the preferred basis, thus sharply violating the no-restriction hypothesis. We conclude in Section \ref{sec:conc} with a short summary and an outlook on the possible use of physical models based on nonlinear quantum dynamics.

\section{Gisin's theorem}
\label{sec:2}

In 1989, Nicolas Gisin \cite{Gisin1989} proved a powerful no-go theorem, which implies that nonlinear modifications of quantum dynamics typically lead to superluminal signalling (cf. also \cite{Bassi_NS}).

Let us briefly recall the theorem in a refined version presented in \cite{Gisin2001}, co-authored by Simon,  Bu\v{z}ek and Gisin. Let $\h = \h_A \otimes \h_B$ be a bipartite Hilbert space and let $S(\h)$ denote the space of mixed states, i.e.  density operators on $\h$. Suppose that two spacelike separated parties, Alice and Bob, share a pair of quantum systems (say, `particles') in an entangled state $\ket{\Psi} \in \h$. It is assumed that these agents can interact \emph{locally} with their systems and their local environments (see Fig. \ref{fig:Gisin} \textbf{a)}).  
More specifically, any Alice's operation is implemented as a map $X_{AA'} \otimes \bone_{BB'}$ acting on the Hilbert space $\h_{AA'} \otimes \h_{BB'} \vc \h_A \otimes \h_{A'} \otimes \h_B \otimes \h_{B'}$, where $\h_{A'}$ and $\h_{B'}$ refer to Alice's and Bob's laboratories. 

At a given moment, Alice makes a local (generalised) measurement of an observable on this state. By a suitable choice of her observable, Alice can prepare at a distance any ensemble of local (pure) states, $\{p_i,\ket{\psi_i}\}_i$, with $\ket{\psi_i} \in \h_B$, $0 \leq p_i \leq 1$, $\sum_i p_i = 1$, for Bob. 
Then, Bob inputs his local state into an information-processing device, characterised by a map 
$\E: \h_B \to S(\h_B)$, which results in the statistical combination of final states $\rho_B' = \sum_i p_i\, \E(\ket{\psi_i})$. Eventually, Bob makes a quantum measurement on the effective output state $\rho_B'$.

The theorem proved in \cite{Gisin2001} shows that if Alice can chose between two local observables, $\{P_A^n\}_{n=1,2}$, and hence prepare at-a-distance two different ensembles, $\{p_i^n,\ket{\psi_i^n}\}_i$, then Bob's final state $\rho_B'$ does not depend on Alice's input, that is 
$\rho_B' = \sum_i p_i^1\, \E(\ket{\psi_i^1}) = \sum_i p_i^2\, \E(\ket{\psi_i^2})$, 
\emph{if and only if} the map $\E$ is linear. 
Putting it the other way around, if $\E$ is not linear then Alice can (statistically) send a bit of information to the spacelike separated Bob.

As emphasised in \cite{Gisin2001}, the argument does not require any collapse postulate but follows directly from the consistency of using the Born rule for local and joint probabilities. On the other hand, it relies on a few assumptions, which we unveil below.

The first assumption is explicitly phrased in \cite{Gisin2001} as follows:
\begin{itemize}
    \item[(1a)] Physical states are described by rays in a Hilbert space $\h$ and observables are described by projectors on $\h$. 
\end{itemize}
What is silently assumed, on top of (1a), is that:
\begin{itemize}
    \item[(1b)] The Hilbert space $\h$ is a tensor product, $\h = \h_{AA'} \otimes \h_{BB'}$.
    \item[(1c)] All measurements accessible to Alice and Bob are \emph{local}, that is they are described in terms of projectors of the form $P_{AA'} \otimes \bone$ and $\bone \otimes P_{BB'}$, respectively.
    \item[(1d)] There exists at least one entangled state $\ket{\Psi} \in \h$, which describes the physical state of the system at hand.
\end{itemize}
Assumptions (1b) and (1c) are the aforementioned mentioned `locality' requirements, which do not follow from postulate (1a). 
Assumption (1d) is, in fact, essential for the argument, as emphasised in \cite{Gisin2001}. Assumption (1a) on its own does not specify which states are physical and it allows for modifications of quantum mechanics, in which all physical states are described by a subspace of product states in $\h_{AA'} \otimes \h_{BB'}$.

The second explicit assumption from \cite{Gisin2001} is the standard trace rule:
\begin{itemize}
	\item[(2)] Probabilities for measurement outcomes at any given time are calculated according to the usual trace rule.
\end{itemize}
This postulate  does not require any further specification, except of the way in which it employs the notion of time, which we discuss below.

Finally, the pivotal assumption which leads to the conclusion that the time-evolution of states must be linear completely positive is the \emph{no-signalling principle}:
\begin{itemize}
	\item[(3a)] Superluminal communication between Alice and Bob is excluded.
\end{itemize}
Obviously, this statement implicitly involves the spacetime structure, or at least its part associated with the causal order of events. The spacetime diagram associated with the would-be superluminal communication protocol from Alice to Bob is presented in Fig. \ref{fig:Gisin} \textbf{a)}. This diagram shows that Alice's measurement and Bob's triggering of the information processing must be synchronised. To that end, first, Alice and Bob must synchronise their local clocks (which can be done with the help of the source
 located in their common causal past). But, more importantly, Alice and Bob must also know \emph{when and where} will the remote pure state preparation take place. 
 This is essential for Gisin's argument, because the dynamics, characterised by the map $\E$, is not \textit{a priori} specified neither for mixed states nor for parts of a global entangled system \footnote{Note also the that remote state preparation requires the global state to be entangled at the moment of Alice's measurement, which in particular means that Bob is not allow to measure the system before Alice and the application of his local dynamics.}. Hence, the argument requires an additional assumption:  
\begin{itemize}
	\item[(3b)] The state preparation at-a-distance on Bob's side, effectuated by Alice's measurement, takes place immediately on a \emph{fixed} spacelike hypersurface. 
\end{itemize}
This is, in fact, a rather strong assumption. First, it says that there exists \emph{some} preferred time-slicing in spacetime and, second, that the local measurement process has a nonlocal effect. 

\begin{figure}[h]
\begin{center}
\hspace*{-2.2cm}
\resizebox{0.54\textwidth}{!}{
\begin{circuitikz}
\tikzstyle{every node}=[font=\LARGE]
\node [] at (4,3) {\Huge{\textbf{a)}}};
\draw [line width=0.8pt, ->, >=Stealth] (5.75,3.75) -- (5.75,15.5);
\draw [line width=0.8pt, ->, >=Stealth] (5.75,3.75) -- (16.25,3.75);
\draw [line width=0.8pt, short] (8.5,8.5) -- (15.75,14.6);
\draw [line width=0.8pt, short] (8.5,8.5) -- (4.75,11.9); 
\draw [line width=0.8pt, dashed] (5.75,5) -- (16.25,5);
\draw [line width=0.8pt, dashed] (5.75,6.75) -- (16.25,6.75);
\draw [line width=0.8pt, dashed] (5.75,8.5) -- (16.25,8.5);
\draw [line width=0.8pt, dashed] (5.75,10.25) -- (16.25,10.25);
\draw [line width=0.8pt, dashed] (5.75,12) -- (16.25,12);
\draw [line width=0.8pt, dashed] (5.75,13.75) -- (16.25,13.75);
\node [] at (15.5,3.2) {space};
\node [] at (5,14.5) {time};
\node [] at (15,9.15) {$t = t_0$};
\node [] at (15,10.9) {$t = t_1$};
\node [] at (15,12.65) {$t = t_2$};
\draw [ color={rgb,255:red,0; green,66; blue,170}, line width=1.4pt, short] (8.75,3.75) .. controls (8,11.25) and (9.25,11) .. (8.5,15.25);
\draw [ color={rgb,255:red,0; green,66; blue,170}, line width=1.4pt, short] (13,3.75) .. controls (14.5,9.75) and (11.5,10.5) .. (13,15.25);
\draw [ color={orange}, line width=5pt, short] (13.3,8.5) .. controls (13.2,9.3) .. (12.9,10.25);
\fill (8.6,6.75) circle (4pt); \node[below, yshift=-6mm] {};
\fill (13.4,6.75) circle (4pt); \node[below, yshift=-6mm] {};
\draw[thick, decorate, decoration={snake, amplitude=3mm, segment length=10mm, post length=0.1mm}] (8.6,6.75) -- (13.4,6.75);
\node [] at (11,6)  {$\ket{\psi}$};
\node [] at (8.6,3.2) {Alice};
\node [] at (13,3.2) {Bob};
\fill (11,4.6) circle (4pt); \node[below, yshift=-6mm] {};
\node [] at (11,4.15) {source};
\draw [line width=0.8pt, short] (11,4.6) -- (15.75,8.8);
\draw [line width=0.8pt, short] (11,4.6) -- (4.75,10);
\fill (8.5,8.5) circle (4pt); 
\fill (12.9,10.25) circle (4pt); 
\fill (13.3,8.5) circle (4pt); 
\end{circuitikz}
}
\resizebox{0.54\textwidth}{!}{
\begin{circuitikz}
\node [] at (4,3) {\Huge{\textbf{b)}}};
\tikzstyle{every node}=[font=\LARGE]
  \draw[pattern={horizontal lines},pattern color=orange,draw=none]
    (5,6.75) rectangle +(11.5,3.5);
\draw [line width=0.8pt, ->, >=Stealth] (5.75,3.75) -- (5.75,15.5);
\draw [line width=0.8pt, ->, >=Stealth] (5.75,3.75) -- (16.25,3.75);
\fill (8.5,8.5) circle (4pt); 
\fill (12.9,10.25) circle (4pt); 
\draw [line width=0.8pt, short] (8.5,8.5) -- (15.75,14.6);
\draw [line width=0.8pt, short] (8.5,8.5) -- (4.75,11.9);
\draw [line width=0.8pt, dashed] (5.75,5) -- (16.25,5);
\draw [line width=0.8pt, dashed] (5.75,6.75) -- (16.25,6.75);
\draw [line width=0.8pt, dashed] (5.75,8.5) -- (16.25,8.5);
\draw [line width=0.8pt, dashed] (5.75,10.25) -- (16.25,10.25);
\draw [line width=0.8pt, dashed] (5.75,12) -- (16.25,12);
\draw [line width=0.8pt, dashed] (5.75,13.75) -- (16.25,13.75);
\node [] at (15.5,3.2) {space};
\node [] at (5,14.5) {time};
\node [] at (15,9.15) {$t = t_0$};
\node [] at (15,10.9) {$t = t_1$};
\node [] at (15,12.65) {$t = t_2$};
\draw [ color={rgb,255:red,0; green,66; blue,170}, line width=1.4pt, short] (8.75,3.75) .. controls (8,11.25) and (9.25,11) .. (8.5,15.25);
\draw [ color={rgb,255:red,0; green,66; blue,170}, line width=1.4pt, short] (13,3.75) .. controls (14.5,9.75) and (11.5,10.5) .. (13,15.25);
\fill (8.6,6.75) circle (4pt); \node[below, yshift=-6mm] {};
\fill (13.4,6.75) circle (4pt); \node[below, yshift=-6mm] {};
\draw[thick, decorate, decoration={snake, amplitude=3mm, segment length=10mm, post length=0.1mm}] (8.6,6.75) -- (13.4,6.75); 
\node [] at (11,6)  {$\ket{\psi}$};
\node [] at (8.6,3.2) {Alice};
\node [] at (13,3.2) {Bob};
\fill (11,4.6) circle (4pt); \node[below, yshift=-6mm] {};
\node [] at (11,4.15) {source};
\draw [line width=0.8pt, short] (11,4.6) -- (15.75,8.8);
\draw [line width=0.8pt, short] (11,4.6) -- (4.75,10);
\end{circuitikz}
}
\end{center}
\vspace*{-1.5cm}
\caption{
\textbf{Spacetime diagrams illustrating the protocol for superluminal signalling}. The dashed lines show a preferred time-slicing. Alice and Bob travel along future-directed time-like curves (thin blue lines). Initially, they share a state $\ket{\psi}$ generated at a source in their common past. \textbf{a)} In the scenario considered in \cite{Gisin2001}, Alice makes a local quantum measurement at $t = t_0$ thus preparing some local state for Bob. Bob puts the later in his device, which operates until $t=t_1$ (thick orange line).  
\textbf{b)} In our scenario the evolution of the state $\ket{\psi}$ is global, which is depicted by thin orange lines parallel to the time slices --- see Section \ref{sec:bi} for the description. The events associated with Alice's and Bob's local operations are marked with black dots.}
\label{fig:Gisin}
\end{figure}
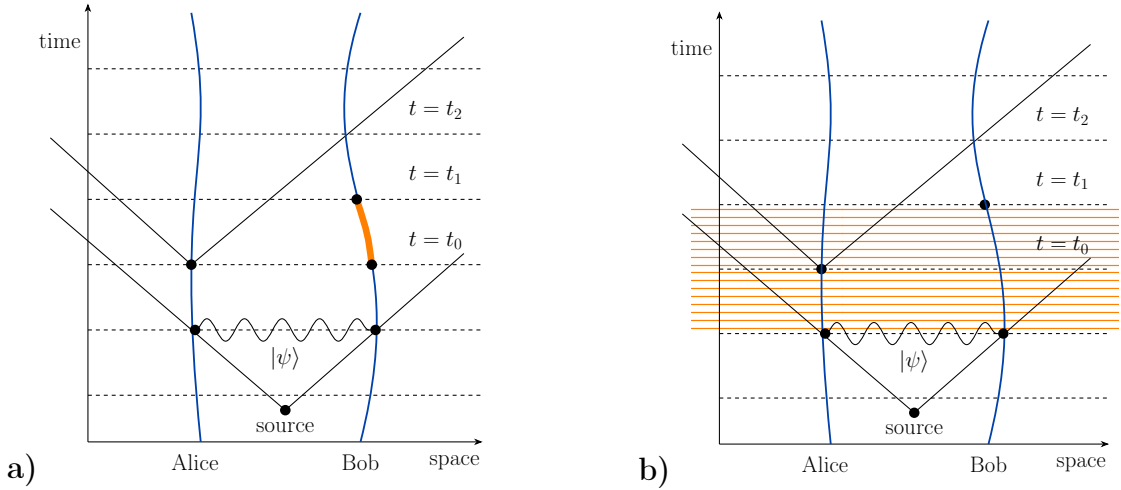

This brings us to a major issue, which lies beyond the scope of Gisin's theorem: It is not specified how Bob's device would behave if Alice did \emph{not} make a measurement.  In order to address this question one would need an extension of the map $\E: \h_B \to S(\h_B)$ to a map $\widetilde{\E}: \h_A \otimes \h_B \to \h_A \otimes \h_B$ or $\widetilde{\E}': \h_A \otimes \h_B \to S(\h_A \otimes \h_B)$. Alternatively, though not quite equivalently, one would need to specify an extension of $\E$ to mixed states, $\overline{\E}: S(\h_B) \to S(\h_B)$. 
It is unclear how to define such extensions consistently, unless $\E$ is a linear map.
Consequently, the use of Gisin's theorem is limited to \emph{local} dynamical maps acting on \emph{pure} states only. 
It  does not apply neither to dynamics $\overline{\E}$ defined directly on mixed states $S(\h_A)$, nor to nonlocal dynamics $\widetilde{\E}$, $\widetilde{\E}'$ defined globally on $\h_A \otimes \h_B$.
 In \cite{Caban20,Caban21} the authors provided explicit examples of \emph{nonlinear} maps $\overline{\E}: S(\h_B) \to S(\h_B)$, which bypass the conclusions of Gisin's theorem. Another example --- an extension of the Schr\"odinger--Newton equation --- was recently considered in \cite{Paterek24}.

Finally, the theorem presented in \cite{Gisin2001} makes some assumptions concerning the agents' freedom to act. Concretely:

\begin{itemize}
    \item[(4)] Alice is free to choose \emph{any} local POVM, so that she can prepare \emph{any} ensemble of local states for Bob.
    \item[(5)] Bob can reconstruct his local density operator to an arbitrary precision.
\end{itemize}
Assumption (4) means that not only Alice has to be able to choose any local observable on $\h_A$, but also that she is able to couple her system to an ancilla $A'$ and perform a projective measurement on $\h_A \otimes \h_{A'}$.

Assumption (5) is milder --- it requires Bob to be able to measure any projector from a tomographically complete set. In the context of General Probabilistic Theories this assumption is known under the name of the ``local tomography axiom'' \cite{Chiribella10}.
Clearly, if we assume that Alice and Bob have access to the same resources, then we have to adopt the stronger condition (4) for both Alice and Bob. 

One can also read these two assumptions as consequences of the ``no-restriction hypothesis'' put forward within General Probabilistic Theories \cite{Chiribella10,No_restrictions}. This hypothesis states, roughly, that \cite{No_restrictions}: ``any mathematically well-defined measurement in the theory should be physically allowed.''  It is, in fact, a rather strong assumption, which excludes large classes of beyond-quantum theories \cite{No_restrictions}. The no-restriction hypothesis is also violated in the scheme of \cite{Kaplan2022}, where the authors assume that the position basis is distinguished in Nature. 

In this context, let us emphasise that the `restriction' refers to the constraints imposed on the set of observables, which are determined by Nature. These constraints can, in principle, be different for Alice and Bob, for instance, because they operate in different regions of spacetime. Notwithstanding, we assume that any agent is free to chose any observable from the set determined by Nature.

This concludes our analysis of Gisin's argument against \emph{local} nonlinear quantum dynamics. We shall now pass on to a detailed analysis of a class of \emph{global} nonlinear quantum dynamics.

\section{General considerations}
\label{sec:3}
\subsection{A class of nonlinear quantum dynamics}
Let $\h$ be a Hilbert space of finite dimension $N+1$ and let $\B(\h)$ denote the space of all hermitian operators on $\h$. Suppose now that the dynamics on $\h$ is governed by a 
nonlinear Schr\"odinger equation:
\begin{equation}\label{eq-gen}
    i\frac{d}{dt} \ket{\psi(t)} = H \ket{\psi(t)} + K\big( \ket{\psi(t)} \big),
\end{equation}
where $H \in \s(\h)$ is a Hamiltonian of the system and $K$ is a `self-potential', that is some nonlinear map $K: \h \to \h$.

We shall consider a specific class of self-potentials, which \emph{distinguishes a chosen basis} of $\h$, $\big\{\ket{j}\big\}_{j=0}^{N}$. Concretely,
\begin{equation}
 K \big( \ket{\psi} \big) =  \sum_{j=0} ^{N} f_j \big( \left| \bra{\psi} A_j \ket{j} \right| \big) \braket{j}{\psi} \ket{j},
 \label{general-nonlinearity}
\end{equation}
with some real functions $f_j$  and operators $A_j \in \s(\h)$. A self-potential of the form \eqref{general-nonlinearity} can be seen as a discrete version of a continuous-variable nonlinear Schr\"odinger equation \cite{MeyerWong13,MeyerWong14,Childs}. For instance, \eqref{general-nonlinearity} with $A_j=\bone$ and $f_j(x)= g x^2$, for some parameter $g\in\R$, yields a nonlinear equation 
\begin{equation}
    i\frac{d}{dt} \ket{\psi(t)} = H \ket{\psi(t)} + g \sum^{N}_{j=0}\left|\braket{j}{\psi(t)}\right|^2 \braket{j}{\psi(t)} \ket{j}.
    \label{square}
\end{equation}
This is a discrete version of the Gross--Pitaevskii equation
\begin{equation}
    i\frac{d}{dt} \psi(x,t) = H \psi(t,x) + g \left|{\psi}(t,x)\right|^2 \psi(t,x),
    \label{squareX}
\end{equation}
which explicitly distinguishes the position (improper) basis in the infinite-dimensional Hilbert space $\h = L^2(\R^3)$. Along the same lines, one can consider e.g. a discrete analogue of the Bia{\l}ynicki-Birula--Mycielski equation \cite{IBB_nonlinear},
\begin{equation}
    i\frac{d}{dt} \ket{\psi(t)} = H \ket{\psi(t)} + g \sum^{N}_{j=0} \big( \log \left|\braket{j}{\psi(t)}\right| \big) \braket{j}{\psi(t)} \ket{j}.
    \label{log}
\end{equation}

Nonlinear equations with a self-potential of the form \eqref{general-nonlinearity} do not preserve the scalar products, but they do preserve the norm of state vectors  (cf. \cite{IBB_nonlinear}). Indeed, 
\begin{equation}
    \begin{split}
         \frac{d}{dt} \braket{\psi}{\varphi} & =
         \bra{\psi} \left(\frac{d}{dt}\ket{\varphi}\right) + \left( \frac{d}{dt} \bra{\psi} \right)\ket{\varphi}\\
         &= i \sum^{N}_{j=0} \Big[f_j \big( \left| \bra{\psi} A_j \ket{j} \right| \big) - f_j \big( \left| \bra{\varphi} A_j \ket{j} \right| \big) \Big] \braket{\psi}{j}\braket{j}{\varphi},
    \end{split}
    \label{dadt}
\end{equation}
which equals zero for $\ket{\varphi} = \ket{\psi}$. 

As in the standard Schr\"odinger equation, the Noether symmetry associated with the probability conservation is the $U(1)$ symmetry, $\ket{\psi(t)} \rightsquigarrow  e^{i \lambda \ket{\psi(t)}}$, for $\lambda \in \mathbb{R}$. One can make the global phase time-dependent by a suitable compensation in the Hamiltonian part:
\begin{align}\label{gauge}
 \ket{\psi(t)} \rightsquigarrow  e^{i \lambda(t)}\ket{\psi(t)}, \qquad H \rightsquigarrow  H - \lambda'(t) \bone.
\end{align}

\subsection{Nonlinear dynamics in bipartite systems\label{sec:bi}}

The central goal of this paper is to study the dynamics of quantum information in spacelike-separated bipartite systems under a nonlinear evolution of type \eqref{eq-gen}. To this end we consider a tensor product Hilbert space, $\h = \h_A \otimes \h_B$, with the basis $\{ \ket{jk} \}$, for $j \leq \dim \h_A$, $k \leq \dim \h_B$, distinguished by the nonlinear term \eqref{general-nonlinearity}. In general, the local Hilbert spaces need not have the same dimension. Eq.~\eqref{eq-gen} thus becomes
\begin{equation}\label{eq-gen2}
    i\frac{d}{dt} \ket{\psi(t)} = H \ket{\psi(t)} + \sum_{j,k} f_{jk} \big( \left| \bra{\psi} A_{jk} \ket{jk} \right| \big) \braket{jk}{\psi} \ket{jk},
\end{equation}
where $A_{ij}$ is an operator on a tensor product on a Hilbert space. An elementary assumption we need to make in order to avoid superluminal signalling (see \cite{LocalH1,LocalH2}) is the locality of the Hamiltonian part, $H = H_A\otimes \mathbb{1} +  \mathbb{1} \otimes H_B$. On the other hand, locality \emph{cannot} be imposed on the self-potential, because the nonlinear term in \eqref{eq-gen2} acts on global states. In other words, the nonlinearity of type \eqref{general-nonlinearity} is always nonlocal.

Equation \eqref{eq-gen2} defines a dynamical map $\F: \R^+ \times \h_A \otimes \h_B \to \h_A \otimes \h_B$, which describes the time-evolution in the space of pure states, $\ket{\Psi(t)} = \ket{ \F \big(t,\Psi(0)\big)}$. Such a map determines uniquely the  dynamics in the spaces of local states, $S(\h_A)$ and $S(\h_B)$, through the partial trace rule:
\begin{align}
\rho_{A/B}(t) \vc \Tr_{B/A} \ket{\Psi(t)} \bra{\Psi(t)}.
\label{eq:partial-trace}
\end{align}
Note, however, that the function $t \mapsto \rho_{A/B}(t)$ \textit{does not}, in general, define a dynamical map $\overline{\E}: \R^+ \times S(\h_B) \to S(\h_B)$, because $\rho_{A/B}(t)$ depends on the global state $\ket{\Psi(t)}$ (cf. \cite{Alicki95}).

With the vector state decomposition,
\begin{equation}
    \ket{\psi}= \sum_{j,k} \alpha_{jk} \ket{jk}, \qquad \sum_{j,k} |\alpha_{jk}|^2 =1,
    \label{two-qubit-state}
\end{equation}
the evolution is governed by a set of coupled nonlinear equations on $N+1$ complex functions $\alpha_{jk}$,
\begin{equation}
i\partial_t \alpha_{jk}(t) = \sum_{\ell,m} \alpha_{\ell m}(t) \bra{jk} H_A\otimes \mathbb{1} +  \mathbb{1} \otimes H_B \ket{\ell m} + f_{jk} \left( \Big\vert \sum_{n,o} \alpha_{no}(t) \bra{jk} A_{jk} \ket{no}  \Big\vert \right) \alpha_{jk}(t). 
\label{G-P-system}
\end{equation}

For the sake of simplicity, we restrict ourselves to the case $A_{jk} = \bone$, so that the coupling between the equations for different $jk$'s is effectuated only by the local Hamiltonian term:
\begin{equation}
i\partial_t \alpha_{jk}(t) = \sum_{\ell} \alpha_{\ell k}(t) \bra{j} H_A \ket{\ell} + \sum_{m} \alpha_{jm}(t) \bra{k} H_B \ket{m} + f_{jk} \left( \left\vert \alpha_{jk}(t)  \right\vert \right) \alpha_{jk}(t). 
\label{G-P-system2}
\end{equation}
Suppose now that two spacelike separated parties, Alice and Bob, share a --- possibly entangled --- quantum state $\ket{\psi}$. The state undergoes a global time-evolution governed by equation \eqref{eq-gen}. The time-parameter $t$ appearing explicitly in Eq. \eqref{eq-gen} is a `laboratory time' corresponding to a global slicing of the Minkowski spacetime --- see Fig. \ref{fig:Gisin} \textbf{b)}. Let us emphasise that the preferred time-slicing of spacetime is determined explicitly by dynamical equation \eqref{eq-gen} and not implicitly by the measurement (recall assumption (3b)), as it was the case in Gisin's scheme. 

This should be contrasted with the setting of Gisin's scenario, in which Alice prepares at a distance an ensemble of \emph{local} pure quantum states for Bob, who puts them into a \emph{local} device and then measures the outgoing state. In such a scenario, we ignore the Alice's part of the state and actually we have to do so, because there is no natural way of extending the action of Bob's local box on a global state.

Hence, the advantage of working with a global nonlinearity of the form \eqref{eq-gen} is that we avoid the notorious problem of extending the dynamics from pure states to statistical mixtures. The price to pay is that we are allowed to consider global pure states only, also after a local measurement. Consequently, in contrast to \cite{Gisin2001}, we do need to assume some state-update rule when implementing the local measurement. On the other hand, we will also consider a communication protocol, which does not require any measurement on Alice's side. Another advantage of the scenario considered here is that, while we do need to make assumption (3b), Bob's time of measurement need not be synchronised with the time of Alice's intervention. In particular, in Sec. \ref{sec:meas} we consider a communication protocol, in which Alice makes or does not make a measurement.

In summary, we adopt assumptions (1a--1d) and (2) from Section \ref{sec:2}, and we wish to check whether it is possible to satisfy (3a) --- the no-signalling principle --- within the class of global nonlinear dynamics \eqref{eq-gen}. To that end, we shall consider various relaxations of the no-restriction hypothesis and explore the possibility of signalling through local unitaries rather than projective measurements.

\subsection{Analytic solutions}

Eqs. \eqref{G-P-system2} form a complicated system of coupled complex nonlinear ODEs with parameters. In general, one has to resort to numerical methods to study their behaviour. There exists, however, a special class of dynamics \eqref{G-P-system2}, for which an explicit analytic solution can be written.

\begin{prop}
   If both local Hamiltonians $H_A$, $H_B$ are diagonal in the basis singled out by the nonlinearity, then the system \eqref{G-P-system2} has a unique analytic solution. 
   \label{prop:sol}
\end{prop}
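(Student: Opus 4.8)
The plan is to exploit the fact that diagonal local Hamiltonians completely \emph{decouple} the system \eqref{G-P-system2} into independent scalar equations, one for each amplitude $\alpha_{jk}$. First I would write $\bra{j} H_A \ket{\ell} = E^A_j\, \delta_{j\ell}$ and $\bra{k} H_B \ket{m} = E^B_k\, \delta_{km}$, where $E^A_j$, $E^B_k$ are the respective eigenvalues. The two Hamiltonian sums in \eqref{G-P-system2} then collapse and the equation reduces, for each pair $(j,k)$, to
\begin{equation}
 i\partial_t \alpha_{jk}(t) = \big(E^A_j + E^B_k\big)\,\alpha_{jk}(t) + f_{jk}\big(|\alpha_{jk}(t)|\big)\,\alpha_{jk}(t),
\end{equation}
with no coupling to the remaining amplitudes.

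The conceptually central step is to observe that each modulus $|\alpha_{jk}(t)|$ is a constant of motion. This is the mode-by-mode version of the norm-preservation computation in \eqref{dadt}: since $E^A_j + E^B_k$ and $f_{jk}$ are real, a direct calculation gives $\tfrac{d}{dt}|\alpha_{jk}|^2 = 2\,\mathrm{Re}\big(\overline{\alpha}_{jk}\,\dot\alpha_{jk}\big) = 0$, because $\overline{\alpha}_{jk}\dot\alpha_{jk}$ is purely imaginary. Hence $|\alpha_{jk}(t)| = |\alpha_{jk}(0)|$ for all $t$, so the argument of $f_{jk}$ is frozen at its initial value and the nonlinear frequency $f_{jk}(|\alpha_{jk}(0)|)$ becomes a genuine \emph{constant}. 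The equation thereby degenerates into a linear one with constant coefficient: writing $\alpha_{jk}(t) = r_{jk}\,e^{i\phi_{jk}(t)}$ with $r_{jk} = |\alpha_{jk}(0)|$, the phase obeys $\dot\phi_{jk} = -\big(E^A_j + E^B_k + f_{jk}(r_{jk})\big)$, yielding the closed-form solution
\begin{equation}
 \alpha_{jk}(t) = \alpha_{jk}(0)\,\exp\!\Big[{-i\big(E^A_j + E^B_k + f_{jk}(|\alpha_{jk}(0)|)\big)\,t}\Big].
\end{equation}

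For uniqueness I would argue as follows. The conservation law above holds for \emph{any} differentiable solution, independently of the regularity of $f_{jk}$, so every solution sharing the initial datum $\alpha_{jk}(0)$ carries the same time-independent modulus. If $\alpha_{jk}(0)=0$ the amplitude stays identically zero; if $\alpha_{jk}(0)\neq 0$ the polar decomposition is valid for all $t$, the phase then satisfies a linear first-order ODE with constant coefficient and prescribed initial value, and is therefore determined uniquely. Reassembling the amplitudes reproduces the formula above, giving existence and uniqueness at once. The one point requiring care is the possible non-Lipschitz behaviour of $f_{jk}$ near the origin (for instance the logarithmic nonlinearity \eqref{log}), and I expect this to be the main obstacle; it is, however, circumvented precisely by the conservation of $|\alpha_{jk}|$, which forbids any trajectory from reaching a singular point of $f_{jk}$ unless it starts there, in which case the term $f_{jk}(|\alpha_{jk}|)\alpha_{jk}$ vanishes and the constant solution is the unique one.
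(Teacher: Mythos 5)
Your proposal is correct and follows essentially the same route as the paper: diagonal $H_A$, $H_B$ decouple \eqref{G-P-system2} into independent scalar equations, the modulus $|\alpha_{jk}|$ is conserved so the nonlinear term reduces to a constant frequency shift, and the phase solves a trivial linear ODE, yielding exactly the solution \eqref{sol}. Your additional care about uniqueness --- the vanishing initial amplitude case and the possible non-Lipschitz behaviour of $f_{jk}$ at the origin, both handled via the conservation of $|\alpha_{jk}|$ --- goes slightly beyond what the paper spells out and is a welcome refinement rather than a different argument.
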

\begin{proof}
Let us denote 
\begin{align}\label{X}
    a_j \vc \bra{j} H_A \ket{j}, && b_k \vc \bra{k} H_B \ket{k}, && \text{ and } && X_{jk} \vc \bra{jk} H_A \otimes \bone + \bone \otimes H_B \ket{jk} = a_j + b_k.
\end{align}
Because the local Hamiltonians are diagonal in the preferred basis, so is the global Hamiltonian. The system \eqref{G-P-system2} decouples and we are left with $(\dim \h_A) \cdot (\dim \h_B)$ independent ODEs:
\begin{equation}
    i \partial_t \alpha_{jk}(t) = \alpha_{jk}(t)X_{jk} + f_{jk}\left(\vert \alpha_{jk}(t) \vert \right) \alpha_{jk}(t), 
    \label{system-of-solutions-diagonal}
\end{equation}
Each of these equations can be solved analytically. Let us write 
\begin{equation}\label{ansatz}
    \alpha_{jk}(t) = \aleph_{jk}(t) e^{i \varphi_{jk}(t)},
\end{equation}
where $\aleph_{jk}$ and $\varphi_{jk}$ are real functions. Then, \eqref{system-of-solutions-diagonal} yields
\begin{equation}\label{eq_diag}
   i \left(\aleph'_{jk}(t) e^{i \varphi_{jk}(t)} + i \varphi'_{jk}(t) \aleph_{jk}(t) e^{i \varphi_{jk}(t)} \right) = X_{jk} \aleph_{jk}(t) e^{i \varphi_{jk}(t)} + f_{jk}\left(\vert\aleph_{jk}(t) \vert\right) \aleph_{jk}(t) e^{i \varphi_{jk}(t)}.
\end{equation}
We thus get: 
\begin{equation}
    \begin{cases} 
    \aleph'_{jk}(t) = 0, \\
    \varphi'_{jk}(t) = - \left(X_{jk} + f_{jk}\left(\vert\aleph_{jk}(t) \vert\right)\right). 
 \end{cases}
\end{equation}
Hence, the solutions are 
\begin{equation}\label{sol}
    \begin{cases}
        \aleph_{jk}(t) = \aleph_{jk}^0=\text{const.}, \\
        \varphi_{jk}(t) = - \left(X_{jk} + f_{jk}(\aleph_{jk}^0)\right)t+\varphi_{jk}^0, 
    \end{cases}
\end{equation}
where $\alpha_{jk}(0)=\aleph_{jk}^0 e^{i \varphi^0_{jk}}$ are the initial conditions for \eqref{G-P-system2}.
\end{proof}

It turns out that if one of the local Hamiltonians is diagonal in the preferred basis, then the system \eqref{G-P-system2} has an additional symmetry.

\begin{prop}\label{prop:diag}
    If $H_B$ is diagonal in the basis singled out by the nonlinearity, then 
    \begin{align}
        \bra{k} \rho_B(t) \ket{k} = \bra{k} \rho_B(0) \ket{k}, \quad \text{ for all } \quad t \geq 0 \; \text{ and } \;  k \in \{0,\ldots,\dim \h_B\}.
    \end{align}
    Symmetrically, if $H_A$ is diagonal, then the diagonal of $\rho_A$ is constant in time.
\end{prop}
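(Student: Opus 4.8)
The plan is to recognise the correct conserved quantity and express it as a full-state expectation value. Writing $P_k \vc \bone_A \otimes \ket{k}\bra{k}$, the partial-trace definition \eqref{eq:partial-trace} yields the standard identity $\bra{k}\rho_B(t)\ket{k} = \Tr\!\big[\ket{\Psi(t)}\bra{\Psi(t)} P_k\big] = \bra{\Psi(t)} P_k \ket{\Psi(t)}$, which in the decomposition \eqref{two-qubit-state} is simply $\sum_j |\alpha_{jk}(t)|^2$. The whole task thus reduces to showing that $\tfrac{d}{dt}\bra{\Psi}P_k\ket{\Psi} = 0$.

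Differentiating and inserting the equation of motion \eqref{eq-gen} (with $\ket{K}\vc K(\ket{\psi})$ denoting the nonlinear term, so that $\ket{\dot\Psi} = -i(H\ket{\Psi}+\ket{K})$ and $\bra{\dot\Psi} = i(\bra{\Psi}H+\bra{K})$), I would obtain
\begin{equation*}
\frac{d}{dt}\bra{\Psi}P_k\ket{\Psi} = i\bra{\Psi}[H,P_k]\ket{\Psi} + i\big(\bra{K}P_k\ket{\Psi} - \bra{\Psi}P_k\ket{K}\big),
\end{equation*}
and then argue that the two bracketed contributions vanish separately. For the Hamiltonian commutator I would invoke the locality assumption $H = H_A\otimes\bone + \bone\otimes H_B$: the summand $H_A\otimes\bone$ commutes with $P_k$ because the two operators act on different tensor factors, while $[\bone\otimes H_B,\,P_k] = \bone_A \otimes [H_B,\ket{k}\bra{k}]$ vanishes precisely because $H_B$ is diagonal in the preferred basis, whence $[H_B,\ket{k}\bra{k}]=0$. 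Thus $[H,P_k]=0$ and the first term drops.

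For the nonlinear contribution I would use that in the system \eqref{G-P-system2} (i.e.\ with $A_{jk}=\bone$) the self-potential is diagonal in the product basis, $\ket{K} = \sum_{j,m} f_{jm}(|\alpha_{jm}|)\,\alpha_{jm}\ket{jm}$. Applying $P_k$ keeps only $m=k$, so $\bra{\Psi}P_k\ket{K} = \sum_j f_{jk}(|\alpha_{jk}|)\,|\alpha_{jk}|^2$ is manifestly real, and $\bra{K}P_k\ket{\Psi}$ equals the same real number; their difference therefore cancels. This gives $\tfrac{d}{dt}\bra{k}\rho_B(t)\ket{k}=0$, and integrating from $0$ to $t$ yields the claim. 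The symmetric statement for $\rho_A$ follows by exchanging the roles of $H_A$ and $H_B$.

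The computation is short, so the genuine content is conceptual rather than technical: the point is to see that the right invariant is the full-state expectation of the block projector $P_k$, and that the self-potential — although globally nonlocal — cannot redistribute probability between the $k$-sectors, since being diagonal in the distinguished basis it contributes only a phase. I would stress that both hypotheses are essential: a non-diagonal $H_B$ would produce a nonzero $[H_B,\ket{k}\bra{k}]$ and thereby allow probability to flow between sectors, and a non-diagonal $A_{jk}$ would spoil the reality argument that kills the nonlinear term.
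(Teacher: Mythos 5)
Your proof is correct, and it establishes exactly what the paper's proof establishes, but you package the computation more structurally. The paper differentiates the components $\sum_j \alpha_{jk}\overline{\alpha_{jk'}}$ directly, inserts the ODEs \eqref{G-P-system2}, and kills the $H_A$ contribution by an index swap $j\leftrightarrow\ell$ combined with hermiticity; you instead identify the conserved quantity as $\bra{\Psi}P_k\ket{\Psi}$ with $P_k=\bone_A\otimes\ket{k}\bra{k}$, so that the $H_A$ term disappears because $H_A\otimes\bone$ trivially commutes with $P_k$, the $H_B$ term disappears because $[H_B,\ket{k}\bra{k}]=0$ for diagonal $H_B$, and the nonlinear term disappears because $\bra{\Psi}P_k\ket{K}$ is manifestly real. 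The two arguments are the same calculation underneath, but your commutator formulation makes the mechanism transparent (each cancellation has a one-line structural reason) and makes visible exactly where the hypotheses enter --- in particular that a non-diagonal $A_{jk}$ would break the reality argument, a point the paper does not flag. One bonus of the paper's component version is that it computes $\dt_t\bra{k}\rho_B\ket{k'}$ for general $k,k'$ before specialising to $k=k'$, which is mildly more information, but nothing in the proposition requires it.
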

\begin{proof}
    With the notation \eqref{two-qubit-state} we have $\bra{k} \rho_B \ket{k'} = \sum_{j} \alpha_{jk} \overline{\alpha_{jk'}}$. Using equations \eqref{G-P-system2} we compute
    \begin{equation}
    \begin{aligned}
        \dt_t \bra{k} \rho_B(t) \ket{k'} & = \sum_j \big[ ( \dt_t \alpha_{jk}(t) ) \overline{\alpha_{jk'}(t)} + \alpha_{jk}(t)  \dt_t \overline{\alpha_{jk'}(t)}\, \big] \\
        & = i \sum_j \bigg[ \sum_{\ell} \Big( \overline{\alpha_{\ell k}(t)} \overline{\bra{j} H_A \ket{\ell}} \alpha_{jk'}(t) - \alpha_{\ell k}(t) \bra{j} H_A \ket{\ell} \overline{\alpha_{jk'}(t)} \Big) \\
        & \quad + \sum_{m} \Big( \overline{\alpha_{jm}(t)} \overline{\bra{k} H_B \ket{m}} \alpha_{jk'}(t) - \alpha_{jm}(t) \bra{k} H_B \ket{m} \overline{\alpha_{jk'}(t)} \Big) \\
        & \quad + f_{jk} \left( \left\vert \alpha_{jk}(t)  \right\vert \right) \overline{\alpha_{jk}(t)} \alpha_{jk'}(t) - f_{jk} \left( \left\vert \alpha_{jk}(t)  \right\vert \right) \alpha_{jk}(t) \overline{\alpha_{jk'}(t)}  \, \bigg].
        \end{aligned}
    \end{equation}
    If $\bra{k} H_B \ket{m} = b_k \delta_{km}$, with some $b_k \in \R$, then we get
    \begin{align}
        \dt_t \bra{k} \rho_B(t) \ket{k} & = i \sum_{j,\ell} \Big( \overline{\alpha_{\ell k}(t)} \bra{\ell} H_A \ket{j} \alpha_{jk}(t) - \alpha_{\ell k}(t) \bra{j} H_A \ket{\ell} \overline{\alpha_{jk}(t)} \Big) = 0,
    \end{align}
    where we have used the fact that one can exchange the indices $j$ and $\ell$, which both range from 0 to $\dim \h_A - 1$.
    
    Analogously, one shows that the diagonal of $\rho_A$ is constant when $\h_B$ is diagonal. 
\end{proof}

\section{Nonlinear dynamics of two qubits\label{sec:4}} 

\subsection{Global dynamics}

For our numerical studies, we shall consider the simplest non-trivial case of the system \eqref{G-P-system2}, which is that of two qubits, $\dim(H_A)=\dim(H_B)=2$. We shall also focus on the Gross--Pitaevskii type nonlinearity \eqref{square}. We parametrise the local Hamiltonian as
\begin{align}\label{local-hamiltonian}
   H_A=\begin{pmatrix}
a_1 & c \\
\bar{c} & a_2
\end{pmatrix}, 
&&
    H_B=\begin{pmatrix}
b_1 & d \\
\bar{d} & b_2
\end{pmatrix},
\end{align}
with free parameters $a_1, a_2, b_1, b_2 \in \R$ and $c, d \in \mathbb{C}$.  The system \eqref{G-P-system2} then reads
\begin{equation}
i \partial_t 
\begin{pmatrix}
 \alpha_{00}(t) \\
\alpha_{01}(t) \\
 \alpha_{10}(t) \\
\alpha_{11}(t)
\end{pmatrix} = 
 \begin{pmatrix}
a_1+b_1 & d & c & 0 \\
\bar{d} & a_1+b_2 & 0 & c \\
\bar{c} & 0 & a_2+b_1 & d \\
0 & \bar{c} & \bar{d} & a_2+b_2 
\end{pmatrix} 
\begin{pmatrix}
 \alpha_{00}(t) \\
\alpha_{01}(t) \\
 \alpha_{10}(t) \\
\alpha_{11}(t)
\end{pmatrix}
+ g
\begin{pmatrix}
\vert \alpha_{00}(t) \vert^2 \alpha_{00}(t) \\
\vert \alpha_{01}(t)  \vert^2  \alpha_{01}(t) \\
\vert \alpha_{10}(t)  \vert^2  \alpha_{10}(t) \\
\vert \alpha_{11}(t)  \vert^2  \alpha_{11}(t)
\end{pmatrix}.
\label{GP_qubits}
\end{equation}

Suppose that Alice and Bob share an entangled state of two qubits $\ket{\psi}$. The state $\rho_B$ effectively available to Bob is determined by the standard trace rule (\ref{eq:partial-trace}).

In the preferred basis $\{\ket{jk}\}$ it has the following matrix presentation: 
\begin{equation}
    \rho_B=  \begin{pmatrix}
        \abs{\alpha_{00}}^2 + \abs{\alpha_{10}}^2 & \alpha_{00}\overline{\alpha_{01}} + \alpha_{10}\overline{\alpha_{11}} \\ 
        \alpha_{01}\overline{\alpha_{00}} + \alpha_{11}\overline{\alpha_{10}} & \abs{\alpha_{01}}^2 + \abs{\alpha_{11}}^2
    \end{pmatrix}.
\end{equation}
On the other hand, any two-dimensional density operator $\rho$ can be expanded using Pauli matrices $\vec{\sigma}=(\sigma_x,\sigma_y,\sigma_z)$: 
\begin{equation}
    \rho=\frac{1}{2} (\mathbb{1}+\vec{n}\cdot \vec{\sigma}) = 
    \frac{1}{2} \begin{pmatrix}
        1 + n_z & n_x - i n_y \\ 
        n_x + i n_y & 1-n_z
    \end{pmatrix},
    \label{rho}
\end{equation}
where $\vec{n}=(n_x,n_y,n_z)$ is a Bloch vector within the Bloch ball. Using formula \eqref{rho} we get 
\begin{align}
    \begin{cases}
        n_x&= 2 \Re \left(\alpha_{00}\overline{\alpha_{01}} + \alpha_{10}\overline{\alpha_{11}} \right), \\
        n_y&= - 2 \Im \left(\alpha_{00}\overline{\alpha_{01}} + \alpha_{10}\overline{\alpha_{11}} \right), \\
        n_z&= 2 \left(\vert \alpha_{00} \vert^2 + \vert \alpha_{10} \vert^2\right) - 1. 
    \end{cases}
    \label{Bloch-vector-components}
\end{align}
Furthermore, we have
\begin{align}\label{conc}
\Tr \rho^2_B = \tfrac{1}{2}\big( 1 + \Vert \vec{n} \Vert^2 \big), \qquad \text{and} \qquad \C(\ket{\psi}) = \sqrt{2\big(1-\Tr \rho^2_B \big)} = \sqrt{1- \Vert \vec{n} \Vert^2},
\end{align}
where $\C$ is the \emph{concurrence} -- a measure of entanglement  between the two qubits. The state $\ket{\psi}$ is separable if and only if $\rho_B$ is pure, that is $\Vert \vec{n} \Vert = 1$ and it is maximally entangled iff $\rho_B = \tfrac{1}{2} \bone$.

\subsection{\label{sec:global} Distinguishability of global states}

Let us start with exploring some general characteristics of the two-qubit nonlinear dynamics. Eqs. \eqref{GP_qubits} form a complicated system of 4 coupled complex ODEs with 8 free real parameters: 7 coefficients of the local Hamiltonian (one can be gauged away \eqref{gauge}) and the nonlinear coupling $g$. Hence, a complete analysis of the full parameter space is not possible. Instead, we shall present, in this and the forthcoming section, the highlights from our numerical studies with some fixed parameters. 

A detailed study of nonlinear dynamics \eqref{eq-gen} of a single qubit was carried out in \cite{Childs}. The authors showed that such a system exhibits exponential sensitivity to the initial conditions. More concretely, two initial states with a small overlap, $|\braket{\psi(0)}{\phi(0)}| = 1- \varepsilon$, become distinguishable, $|\braket{\psi(t)}{\phi(t)}| \approx 0$, for times $t = O(g^{-1} \log \varepsilon^{-1})$. This was shown to hold for a general class of nonlinear functions including, in particular, the Gross-Pitaevskii \eqref{square} and logarithmic nonlinearities \eqref{log}. One can expect the system \eqref{GP_qubits} to exhibit similar behaviour. 

Let us start with a simple example of `purely nonlinear' dynamics with $H_A = H_B = 0$. In such a case, an analytic solution to Eqs. \eqref{GP_qubits} is given by formulae \eqref{sol} with $X_{jk}=0$.
Let us consider a class of initial states:
\begin{align}\label{psi0}
    \ket{\Psi_{x}} = \tfrac{1}{\sqrt{2(1+x^2)}} \big( (1+x)\ket{00} + (1-x) \ket{11} \big), \quad \text{ for } x \in [0,1].
\end{align}
For $x=0$, $ \ket{\Psi_{x}}$ is a Bell state, while for $x=1$ it becomes a separable state. More generally, we have $\C(\ket{\Psi_{x}})=(1-x^2)/(1+x^2)$. 

Let us consider two initial states, $\ket{\psi(0)}$ and $\ket{\phi(0)}$, from the class \eqref{psi0} such that $\ket{\psi(0)}$ has a fixed value of concurrence $\C$ and $\ket{\phi(0)}$ is chosen so that $\vert \braket{\psi(0)}{\phi(0)} \vert = 1- \varepsilon$, for some small $\varepsilon>0$.

Figure \ref{fig:overlap1} illustrates the dynamics of the overlap between these two states, $d(t) = |\braket{\psi(t)}{\phi(t)}|$, for a fixed value of concurrence $\C$. It shows that, with $H_A = H_B = 0$, the distinguishability is best --- both in terms of maximal distinguishability and in terms of time to achieve it --- if the initial state is maximally entangled. If one of the initial states is separable, then the distinguishability is constant. 

\begin{figure}[h]
\includegraphics[scale=1]{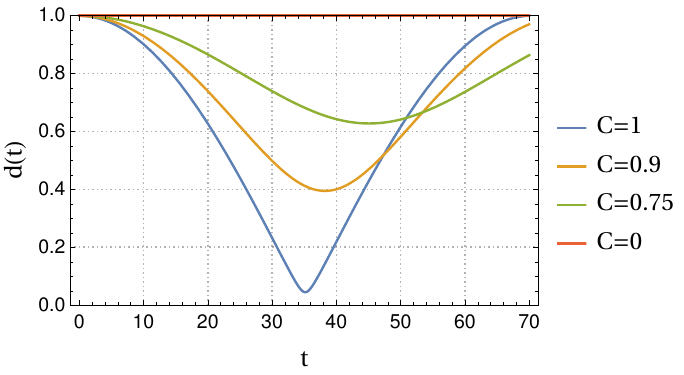}
\caption{\label{fig:overlap1}\textbf{The time-evolution of the overlap between two global states} initially separated by $\varepsilon = 0.001$ for different values of the concurrence $\C$ of the initial state $\ket{\psi(0)}$. The parameters of the dynamics \eqref{GP_qubits} are $H_A = H_B = 0$ and $g=1$.}
\end{figure}

The distinguishability of global states can be significantly enhanced by tuning the parameters of the local dynamics, as illustrated in Figure \ref{fig:overlap2}. The plot on the logarithmic scale, Fig. \ref{fig:overlap2} \textbf{b)}, suggests that the overlap between the states can decay exponentially fast and that the dynamics of the overlap is, in general, chaotic. This is coherent with the results of \cite{Childs} for a single qubit nonlinear dynamics.

\begin{figure}[h]
\textbf{a)} \hspace*{-0.3cm}
\resizebox{0.47\textwidth}{!}{\includegraphics[scale=1]{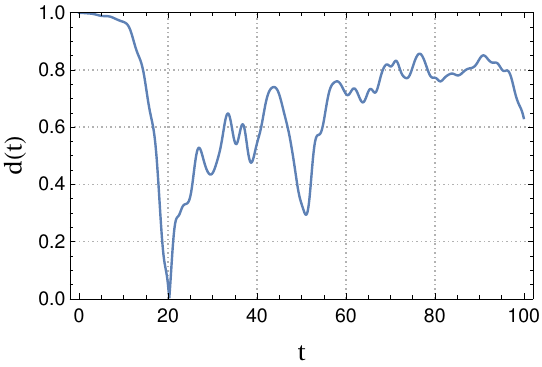}}
\qquad\textbf{b)}\hspace*{-0.5cm}
\resizebox{0.46\textwidth}{!}{\includegraphics[scale=1]{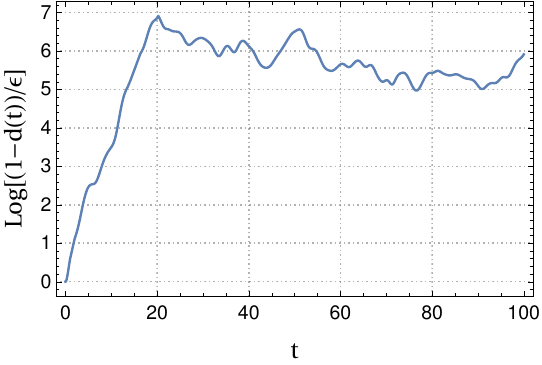}}
\caption{\label{fig:overlap2} \textbf{The time-evolution of the overlap between two states} under the dynamics \eqref{GP_qubits} with $a_1 = 0.1$, $c=0.413$, $d=0.108$, $g=1$ and $a_2=b_1=b_2=0$. The initial state $\ket{\psi(0)} = (\ket{00}+\ket{11})/\sqrt{2}$ is maximally entangled state and $\varepsilon = 0.001$. The plots are shown on \textbf{a)} the linear scale, and \textbf{b)} the logarithmic scale.}
\end{figure}

\subsection{Dynamics of entanglement}

In general, the entanglement in the system described by Eqs. \eqref{GP_qubits} is not constant. In fact, the concurrence \eqref{conc} may fluctuate in time, even if there is no local dynamics, i.e. $H_A= H_B = 0$. Hence, not surprisingly, the nonlinear quantum dynamics of the form \eqref{G-P-system2} goes beyond the LOCC paradigm \cite{Horodecki}, within which the entanglement can never increase in time.

In Fig. \ref{fig:concurrence} we present exemplary dynamics of concurrence under the dynamics \eqref{GP_qubits}. First, we take two initial entangled states, $\ket{\psi(0)}$ and $\ket{\phi(0)}$, from the class \eqref{psi0} such that $\ket{\psi(0)}$ is maximally entangled and $\C(\ket{\phi(0)}) = 1 - \varepsilon$, for a small $\varepsilon >0$. Fig. \ref{fig:concurrence} \textbf{a)} shows that the evolution of concurrence of these two states quickly bifurcates.

Secondly, we take two initial separable states,
\begin{align}\label{conc_sep}
    \ket{\psi(0)} = \ket{00} \qquad \text{ and } \qquad \ket{\phi(0)} = \tfrac{1}{\sqrt{1 + 2(\varepsilon-1)\varepsilon}} \ket{0} \big((1-\varepsilon) \ket{0} + \varepsilon \ket{1})\big).
\end{align}
Fig. \ref{fig:concurrence} \textbf{b)} shows that entanglement can arise under the dynamics \eqref{GP_qubits} even if one starts from a separable initial state. The dynamics of concurrence is again sensitive to the initial state.

\begin{figure}[h]
    \centering
    \textbf{a)} \resizebox{0.44\textwidth}{!}{\includegraphics[scale=1]{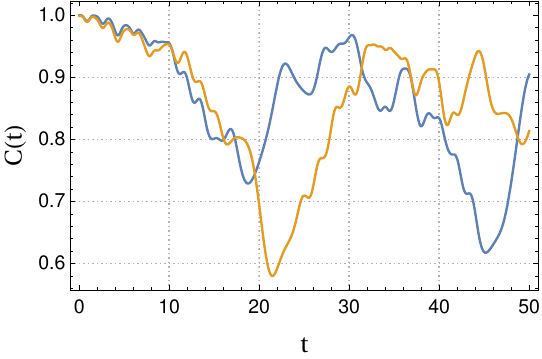}} \qquad
    \textbf{b)} \resizebox{0.44\textwidth}{!}{\includegraphics[scale=1]{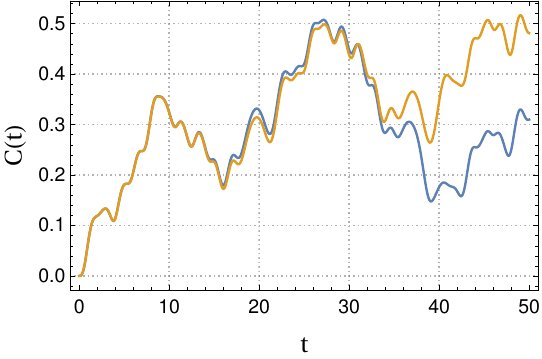}}
    \caption{\textbf{The time-evolution of the concurrence} for system \eqref{GP_qubits} with parameters $a_1 = 0.2$, $c=0.4$, $d=0.5$, $g=1$ and $a_2=b_1=b_2=0$. \textbf{a)} The initial states are the Bell state (blue curve) and a state \eqref{psi0} with concurrence $1-\varepsilon$, for $\varepsilon = 0.001$ (orange curve). \textbf{b)} The case of separable initial states \eqref{conc_sep} with $\varepsilon = 0.001$.}
    \label{fig:concurrence}
\end{figure}

\subsection{\label{sec:global}Sensitivity of local dynamics on the global state}

In the previous subsections we have studied the dynamics of global quantities impelled by the equation \eqref{GP_qubits}. We have witnessed the sensitivity of the overlap and concurrence on the choice of the initial state, at least for some values of dynamics' parameters. We shall now provide further evidence for chaos in the system  \eqref{GP_qubits}. 

Let us consider the dynamics of Bob's local effective state $\rho_B(t)$, as defined by Eq.\eqref{eq:partial-trace}. Using the Fano parametrisation \eqref{rho}, we can depict the time-evolution of $\rho_B$ as a trajectory on a Bloch ball. Figure \ref{fig:Bloch0} shows an example of such a trajectory for dynamics \eqref{GP_qubits} with parameters $a_1 = c = 0.1$, $d=0.3$, $g=2$ and $a_2=b_1=b_2=0$, for the initial state  $\ket{\psi(0)} = (\ket{00}+\ket{11})/\sqrt{2}$ and the time range $t \in [0,100]$.

\begin{figure}[h]
    \textbf{a)}
    \includegraphics[width=0.4\textwidth]{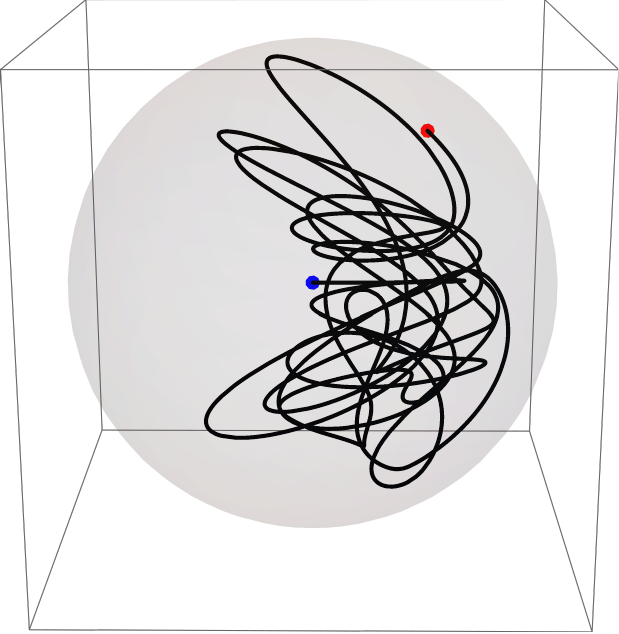} \qquad
    \textbf{b)}
     \includegraphics[width=0.48\textwidth]{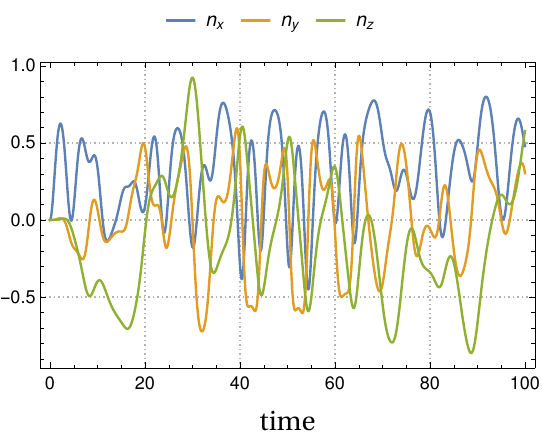}
    \caption{\label{fig:Bloch0}\textbf{An exemplary time-evolution of Bob's reduced density matrix $\rho_B(t)$.} \textbf{a)} The trajectory on the Bloch ball starts at the maximally mixed state (blue dot) and ends up in a partially mixed state (red dot). \textbf{b)} The plot of the corresponding components of the Bloch vector.}
\end{figure}

In order to compare two time-evolutions of Bob's reduced state,  $\rho_B(t)$ and $\rho_B'(t)$, we shall compute the Euclidean distance in the Bloch ball between the two corresponding trajectories
    \begin{equation}
   D(t) = \sqrt{ \left[n_x(t) - n_x'(t)\right]^2 + \left[n_y(t) - n_y'(t)\right]^2 + \left[n_z(t) - n_z'(t)\right]^2},
   \label{lyapunov-distance}
\end{equation}
and plot the quantity $\log ( D(t)/D(0) )$. Such an approach to detect the chaos in quantum systems was also adopted in \cite{QChaos94,QChaos20}. Fig. \ref{fig:exponent} shows a comparison of the logarithmic distance between the trajectories for exemplary dynamics \eqref{GP_qubits} with parameters $a_1 = a_2 = b_2 = d =1$, $b_1 = c = 2$. For the initial states we have chosen the maximally entangled state $\ket{\psi(0)} = (\ket{00}+\ket{11})/\sqrt{2}$ and its perturbation \eqref{psi0} with $x = 0.00005$, which gives $D(0) = 0.0001$. The plots are characteristic to chaotic dynamics and suggest a positive Lyapunov exponent $\lambda>0$ --- see Appendix \ref{app:lyapunov}. Furthermore, the value of the Lyapunov exponent is larger for dynamics with a larger value of the nonlinear coupling, as expected.

\begin{figure}[h]
\includegraphics[width=0.7\textwidth]{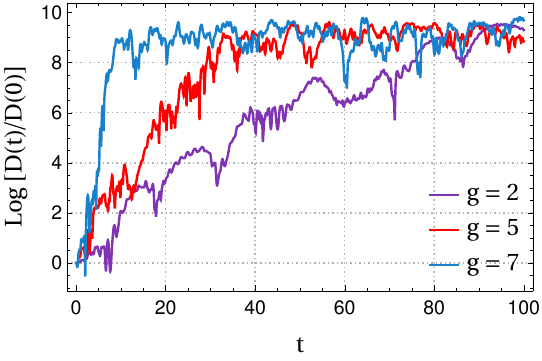}
        \caption{\textbf{Evidence for chaos in nonlinear two-qubit dynamics.} See text for the description.}
    \label{fig:exponent}
\end{figure}

\section{\label{sec:sign}Superluminal signalling}

In the previous section we have analysed several general properties of nonlinear dynamics in bipartite quantum systems. The latter exhibits certain distinctive features, such that the increase of entanglement and chaotic behaviour of trajectories. Up to this point, we have studied the impact on dynamics of the choice of the global initial state, which has been prepared in the common causal past of Alice and Bob (recall Fig. \ref{fig:Gisin}). We are now interested in the possibility of effectuating \emph{operational} superluminal signalling from Alice to Bob. In order to do so, we assume that Alice can encode a bit of information in the system through some local operation on her particle. Then, we inspect whether Bob can infer this bit from the local detection statistics he would obtain by measuring his particle. If this is the case, then superluminal signalling is (statistically)  possible, because we can always arrange the spacetime setup in a way so that the readout event lies outside of the causal future of the sending event. As we shall see, for the general case of local dynamics (i.e., for generic values of parameters of the local Hamiltonians $H_A$ and $H_B$), the no-signaling principle (3a) is violated. This characteristic of dynamics \eqref{GP_qubits} will be investigated across three possible communication protocols.

\subsection{\label{sec:basis}Signalling through the choice of the measurement observable}

We first consider the scenario employed in Gisin's argument. Let Alice and Bob share an entangled state $\ket{\psi} \in \h_A \otimes \h_B$.  Suppose now that Alice can measure one of two possible observables, $X, X' \in \B(\h_A)$, which do not commute. For simplicity, and without loss of generality, we assume that both observables are one-dimensional projectors, $X = \ket{\varphi}\bra{\varphi}$, $X' = \ket{\varphi'}\bra{\varphi'}$, with $0< |\braket{\varphi}{\varphi'}| < 1$. Now, because the dynamics \eqref{G-P-system2} is defined only for pure states in $\h_A \otimes \h_B$ we do need to assume \emph{some} state-update rule after the measurement. We adopt the standard von Neumann postulate.

Let $\ket{\psi(0)}$ be the state prepared at the source. When Alice measures the observable $X$ at a moment $t_0 \geq 0$ (see Fig. \ref{fig:Gisin} \textbf{b)}), she will get the outcome 1 with probability $p_1 = \bra{\psi(t_0)} X \otimes \bone \ket{\psi(t_0)}$ and the global state will be projected to $\ket{\xi_1} = \tfrac{1}{\sqrt{p_1}}(X \otimes \bone) \ket{\psi(t_0)}$, and she will get the outcome 0 with probability $p_0 = 1- p_1$, in which case the global state will become $\ket{\xi_0} = \tfrac{1}{\sqrt{p_0}}\big( (\bone - X) \otimes \bone \big) \ket{\psi(t_0)}$, and analogously for $X'$.
Consequently, depending on whether Alice measures $X$ or $X'$, the effective time-evolution of Bob's reduced density operator for $t \geq t_0$ is, respectively,
\begin{align}
    \rho_B(t) = \sum_{x=0,1} p_x \Tr_A \ket{\xi_x(t)}\bra{\xi_x(t)} \quad \text{ or } \quad \rho_B'(t) = \sum_{x=0,1} p_x' \Tr_A \ket{\xi_x'(t)}\bra{\xi_x'(t)},
\end{align}
where $\ket{\xi_x(t)}$ and $\ket{\xi_x'(t)}$ arise from the dynamics \eqref{G-P-system2} with initial states $\ket{\xi_x(t_0)}$ and $\ket{\xi_x'(t_0)}$, respectively. 
Superluminal signalling is possible if Bob can statistically distinguish between $\rho_B(t)$ and $\rho_B'(t)$, at some time-moment $t>t_0$. Note that if we do not assume the local tomography axiom (and hence the no-restriction hypothesis), then signalling may not be operationally possible even if $\rho_B(t) \neq \rho_B'(t)$. Without loss of generality, we can assume that $t_0 = 0$, because Alice's measurement effectively `resets' the initial state for the global dynamics, so that the evolutions $\rho_B(t)$ and $\rho_B'(t)$ do not depend on the source state $\ket{\psi(0)}$.

We performed an extensive numerical study of the two-qubit dynamics determined by Eqs. \eqref{GP_qubits}. As noncommuting observables we have chosen $X = \ket{0}\bra{0}$ and $X' = \ket{+}\bra{+}$, where $\ket{+} = (\ket{0}+\ket{1})/\sqrt{2}$. We found that for generic parameters of local Hamiltonians, with $c,d \neq 0$, and for a generic entangled source state, Bob can always statistically distinguish between $\rho_B(t)$ and $\rho_B'(t)$, at any time $t>0$. This is true, even if Bob can only measure \emph{one} observable, $\ket{0}\bra{0}$. The case with $c = 0$ and/or $d = 0$ is different and will be described in Section \ref{sec:no}. 
Fig. \ref{fig:measurement12} presents two examples of different dynamics of Bob's reduced state, $\rho_B(t)$ and $\rho_B'(t)$.

\begin{figure}[h]
     \centering
         \begin{subfigure}[b]{0.45\textwidth}
         \centering
         \includegraphics[width=0.75\textwidth]{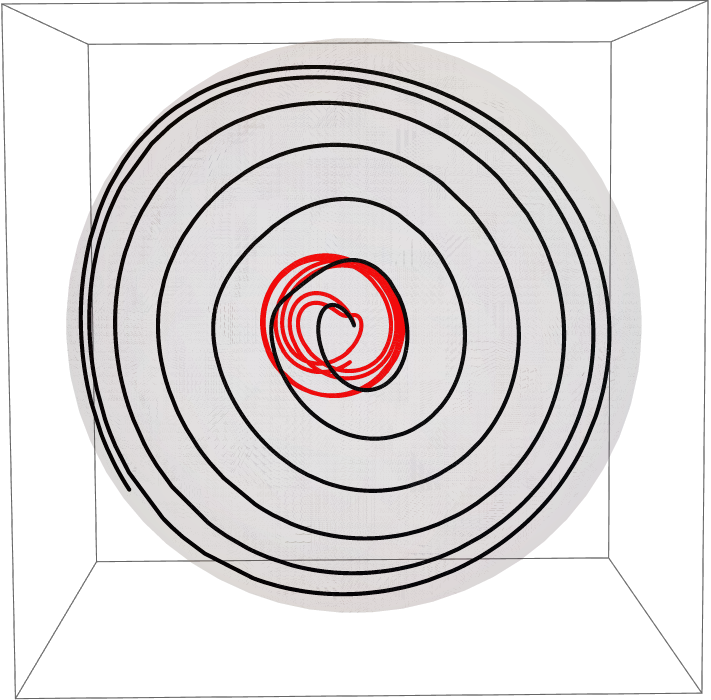}
         \caption{Source state: $\ket{\psi}=(\ket{00}+\ket{11})/\sqrt{2}$}
     \end{subfigure}
     \begin{subfigure}[b]{0.45\textwidth}
         \centering
         \includegraphics[width=0.75\textwidth]{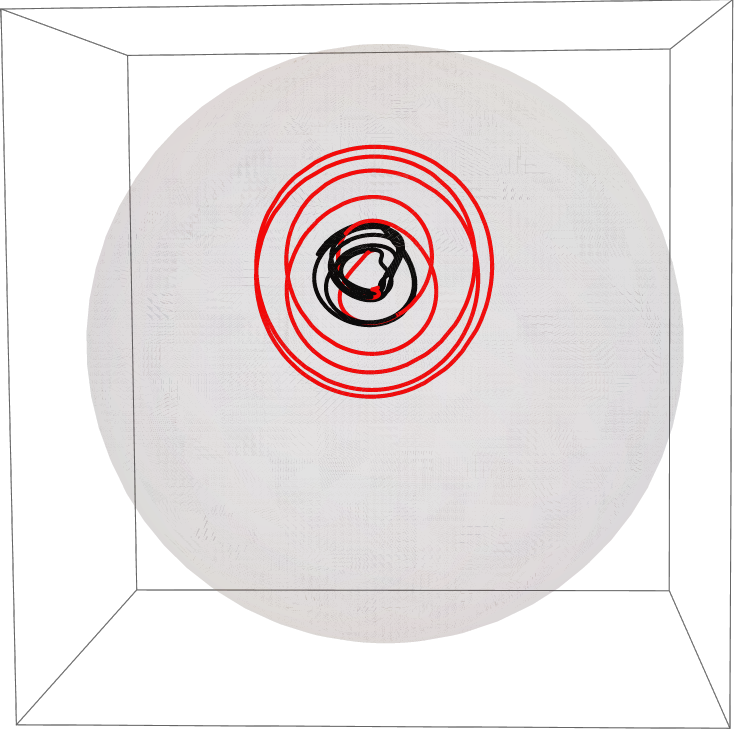}
         \caption{Source state: $\ket{\psi}=(\ket{00}+\ket{01}+\ket{10})/\sqrt{3}$.}
     \end{subfigure}
     \hfill
        \caption{\textbf{Evolution of Bob's state after Alice's measurement of two different observables}. All plots: $a_1=a_2=b_2=c=1$, $b_1=d=2$, $g=3$. Black plot: time-evolution of Bob's state after Alice's measurement of $\ket{0}\bra{0}$; red plot: time-evolution of Bob's state after Alice's measurement of $\ket{+}\bra{+}$.\label{fig:measurement12}}
\end{figure}

What is more, our analysis suggests that the dynamics of Bob's state is very sensitive to the choice of Alice's observables. 
Concretely, we assumed that Alice can choose between $X = \ket{0}\bra{0}$ and $X'= \ket{\varphi_\varepsilon}\bra{\varphi_\varepsilon}$, where $\ket{\varphi_\varepsilon} = \sqrt{1-\varepsilon^2} \ket{0} + \varepsilon \ket{1}$ for a small $\varepsilon > 0$. Such a choice of measurement induces two ensembles of initial states for the dynamics, which differ only very slightly. As it turns out, the nonlinear dynamics \eqref{GP_qubits} drives these ensembles away from each other exponentially fast. In Fig. \ref{fig:measurement22} we plot the logarithm of the distance \eqref{lyapunov-distance} between the trajectories. Because initially we have $\rho_B(0) = \rho'_B(0)$ and hence $D(0)=0$, we have taken $D_\varepsilon(t) = D(t) + \varepsilon$.

\begin{SCfigure}[0.5][h]
        \caption{\textbf{Chaos in the evolution of Bob's effective state induced by Alice's choice of the observable}. The Alice's observables are $X$ and $X'$, as described in the text, with $\epsilon=0.001$. The parameters of dynamics are $a_1=0.1$, $a_2=b_1=b_2=0$, $c=0.3$, $d=0.2$, $g=1$ and the shared entangled state is $\ket{\psi}=(\ket{00}+\ket{11})/\sqrt{2}$.}
         \includegraphics[width=0.55\textwidth]{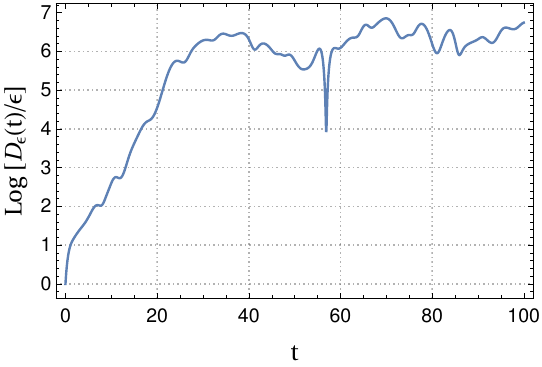}
        \label{fig:measurement22}
\end{SCfigure}

\subsection{\label{sec:meas}Signalling through a local measurement or the lack of it}

We have seen in the previous section that if Alice can choose between measuring two noncommuting observables then she can exploit the dynamics \eqref{GP_qubits} with generic parameters to effectuate superluminal signalling. Suppose now that Alice can only measure \emph{one} observable, say $X= \ket{0}\bra{0}$. Such an assumption constitutes an extreme violation of the no-restriction hypothesis. We assume, however, that Alice can freely decide whether  to perform the measurement on her particle or not, at some moment $t_0 \geq 0$. In this way, she can encode a bit of information into the nonlocal system shared with Bob. The question is whether Bob, being spacelike separated, can (statistically) read out this information. This translates into the problem whether Bob can operationally distinguish at some time $t>t_0$ between two density operators
\begin{align}\label{no_meas}
    \rho_B(t) = \Tr_A \ket{\psi(t)}\bra{\psi(t)} \quad \text{ and } \quad \rho_B'(t) = \sum_{x=0,1} p_x \Tr_A \ket{\xi_x(t)}\bra{\xi_x(t)},
\end{align}
where $\ket{\psi(0)}$ is the state prepared at the source and $\ket{\xi_x(t_0)}$ is defined as in the previous subsection. As previously, we can set $t_0 = 0$ without loss of generality, because assuming that $t_0 > 0$ is equivalent to assuming that the source state was $\ket{\psi(t_0)}$, rather than $\ket{\psi(0)}$ .

Let us note, that this protocol could not have been conceived within the frame of Gisin's theorem without explicitly specifying how does Bob's local device act on a part of an entangled system.

For such a scenario we have performed an extensive numerical study, as in the previous case. The conclusion is again that for generic parameters of the dynamics \eqref{GP_qubits} Bob can always distinguish between the two states \eqref{no_meas}, even if he can only measure one and \emph{the same} observable $X$. 

An example of two time-evolutions of Bob's local state, with and without Alice's projective measurement, is presented in Fig. \ref{fig:measurement}. Observe the difference of evolutions of the $z$ component of the Bloch vector. It means, in particular, that $\Tr X\rho_B(t) \neq \Tr X\rho_B'(t) $ and hence Bob can distinguish the two cases just by measuring the projector $X = \ket{0}\bra{0}$.

\begin{figure}[!htb]
     \centering
     \begin{subfigure}[b]{0.45\textwidth}
         \centering
         \includegraphics[width=0.75\textwidth]{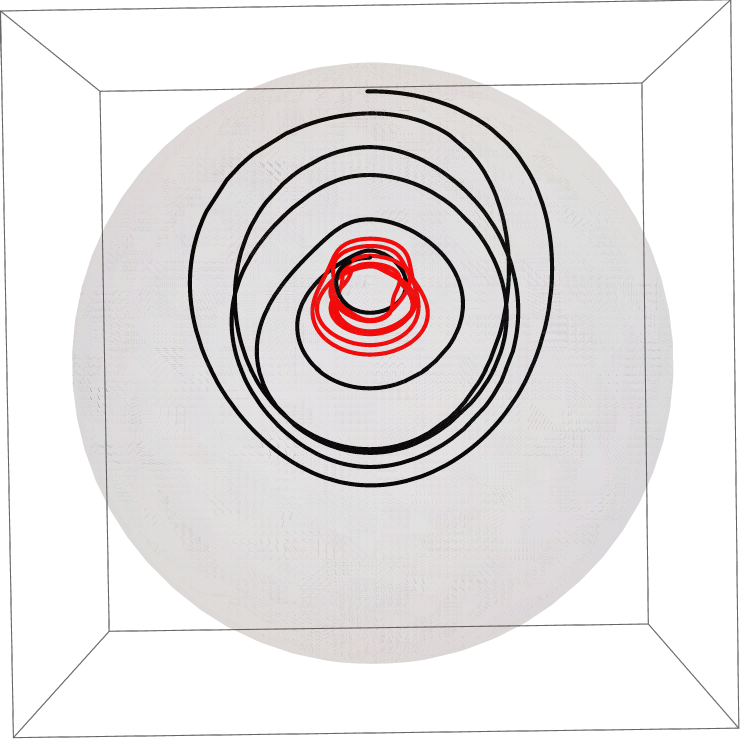}
         \caption{Trajectory in the Bloch ball.}
         \label{fig:initial-conditions-sphere}
     \end{subfigure}
     \hfill
     \begin{subfigure}[b]{0.45\textwidth}
         \centering
         \includegraphics[width=0.9\textwidth]{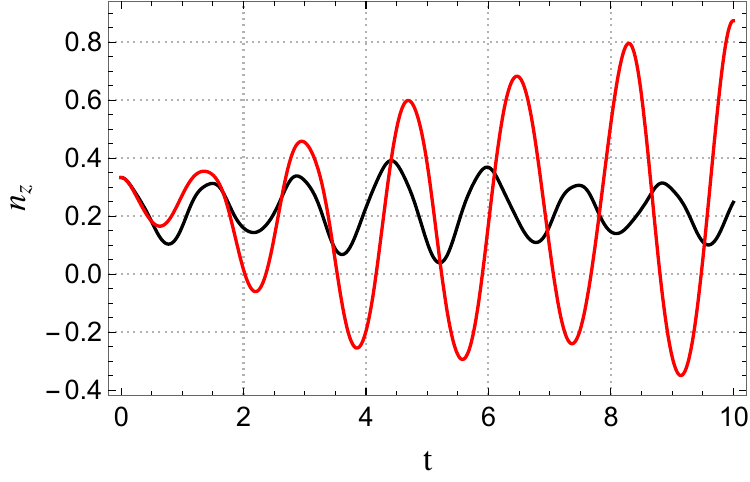}
         \caption{Evolution of the $n_z$ component of the Bloch vector.}
         \label{fig:initial-conditions-norm}
     \end{subfigure}
        \caption{\textbf{Evolution of Bob's state with and without Alice's measurement}.
        All plots: $a_1, a_2, b_2=1$, $b_1, c, d=2$, $g=3$ for the source state $\ket{\psi}=(\ket{00}+\ket{01}+\ket{10})/\sqrt{3}$. Black plot: time-evolution of Bob's state without Alice's measurement; red plot: time-evolution of Bob's state after Alice's measurement.
       }
        \label{fig:measurement}
\end{figure}

\subsection{\label{sec:inter}Signalling through a local intervention}

In the previous subsections we have seen that nonlinear quantum dynamics of the type \eqref{GP_qubits} together with the von Neumann collapse postulate leads, in general, to the violation of the no-signalling principle (3a). This happens, even if we abandon the assumptions (4) and (5) allowing both Alice and Bob to measure just one specific observable.

We shall now consider a scenario, in which there are no projective measurements, and hence the collapse postulate is not needed. Such an approach is motivated by the programme of explaining the measurement problem in terms of decoherence \cite{MeasurementRMP}. It was also adopted in the nonlinear quantum model presented in \cite{Kaplan2022}.

Concretely, we assume that the measurement process is actually a complex local interaction of the quantum system at hand with the (quantum) measuring device. In order to make this description compatible with our model of nonlocal nonlinear dynamics one would need to specify an equation of type \eqref{eq-gen} on the Hilbert space $\h_A \otimes \h_{A'} \otimes \h_B \otimes \h_{B'}$ involving the nonlocal system $AB$ and the local environments $A', B'$. Such an equation induces an inherently nonlocal dynamics, but otherwise concords with the viewpoint underlying decoherence theory, assuming that at the fundamental level all physical systems ought to be described by pure, though entangled, states.

In order to inspect the status of the no-signalling principle we consider a simple protocol, which bypasses the problem of modelling the interactions between the studied systems and the measuring devices. Namely, suppose that Alice can freely change  the local Hamiltonian, $H_A$. We shall call such an operation a \emph{local intervention}. Physically, it could be realised e.g. by a tunable local magnetic field interacting with Alice's particle. 

Let $\ket{\psi(0)}$ be the state prepared at the source and let us denote by $\ket{\psi(t)}$ or $\ket{\psi'(t)}$  its time-evolution under dynamics \eqref{G-P-system2} with $H_A$ or $H_A'$, respectively. Alice can use her freedom to choose between $H_A$ and $H_A'$ to encode a bit of information in the system shared by Bob. Again, without loss of generality, we can assume that she does so a time $t=0$. Then, the issue of superluminal signalling amounts to the question whether Bob can operationally distinguish at time $t > 0$ between 
\begin{align}\label{interv}
    \rho_B(t) = \Tr_A \ket{\psi(t)}\bra{\psi(t)} \quad \text{ and } \quad \rho_B'(t) = \Tr_A \ket{\psi'(t)}\bra{\psi'(t)}.
\end{align}

In the case of two-qubit dynamics \eqref{GP_qubits} Alice could change either one of the diagonal parameters of the Hamiltonian, $a_1, a_2 \in \R$, or the off-diagonal one $c \in \mathbb{C}$. Because of the symmetry \eqref{gauge} it is sufficient to consider just one of the diagonal parameters, say $a_1$. The outcome of our numerical studies is that Alice can always signal to Bob, by manipulating the parameter $a_1$, or $|c|$, or $\arg(c)$.
For generic parameters with $c,d \neq 0$, Bob could see the difference even if he can only measure a single observable $\ket{0}\bra{0}$ on his particle. This happens even if the source state is separable because the dynamics \eqref{GP_qubits} is entangling.

We illustrate the phenomenon of signalling through a local intervention in Figs. \ref{fig:signalling} and \ref{fig:c}. Firstly, we take three source states with different degrees of entanglement: a maximally entangled Bell state in Fig. \ref{fig:signalling-Bell}, an entangled but not maximally entangled state in Fig. \ref{fig:signalling-entangled}, and a separable state in Fig. \ref{fig:signalling-separable}. The red and blue plots illustrate the evolution of Bob's effective state for Alice's choice \( a_1 = 0.3 \) and \( a_1 = 0.2 \), respectively.
\begin{figure}[t]
\hfill
\begin{subfigure}[t]{0.3\textwidth}
    \includegraphics[width=0.7\linewidth]{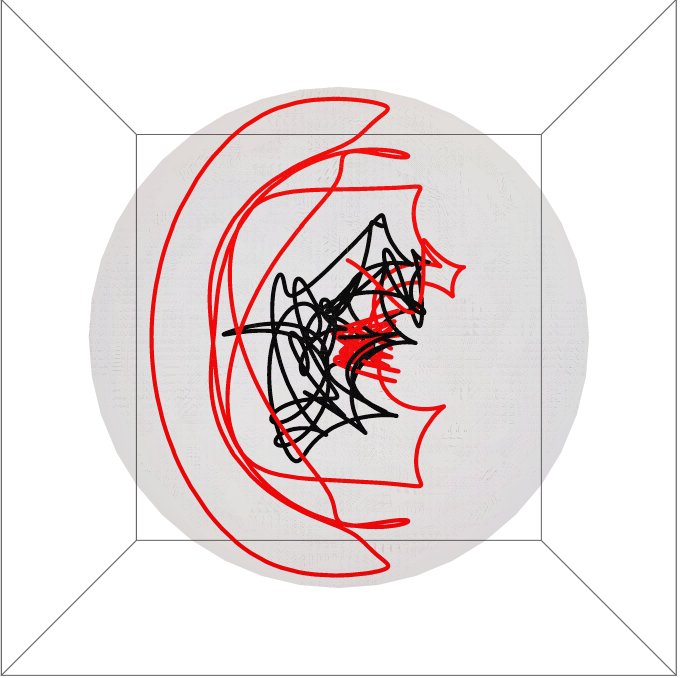}    \includegraphics[width=0.9\linewidth]{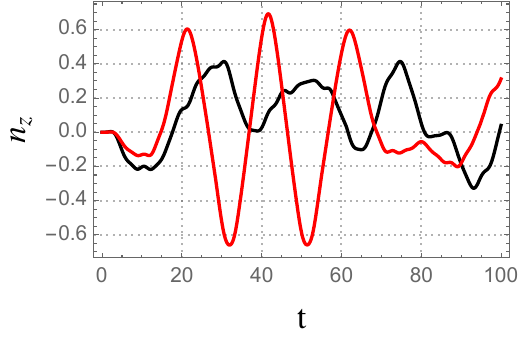}
\caption{Time evolution of a maximally entangled state (Bell state) $\ket{\psi(0)}=(\ket{00}+\ket{11})/\sqrt{2}$.}
\label{fig:signalling-Bell}
\end{subfigure}
\hfill
\begin{subfigure}[t]{0.3\textwidth}
    \includegraphics[width=0.7\linewidth]{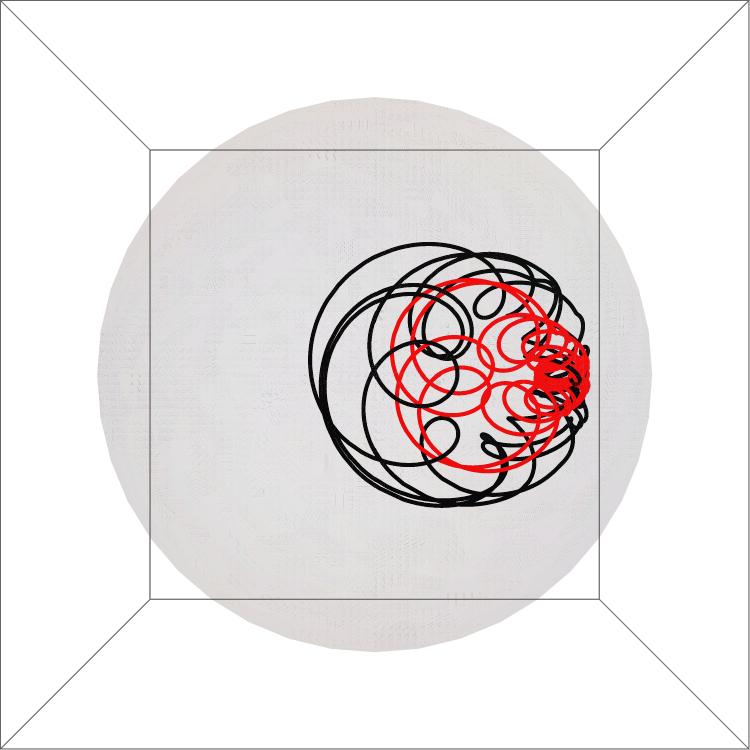}
    \hfill
  \includegraphics[width=0.9\linewidth]{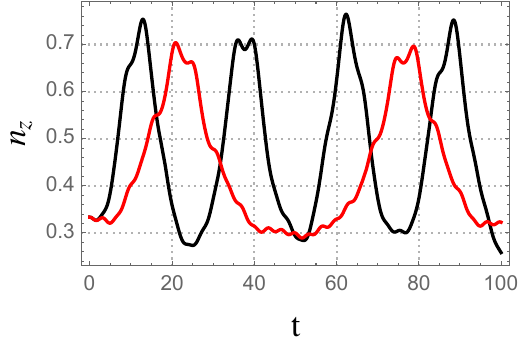}\hfill
\caption{Time evolution of an entangled, but not maximally entangled, state $\ket{\psi(0)}=(\ket{00}+\ket{01}+\ket{10})/\sqrt{3}$.}
\label{fig:signalling-entangled}
\end{subfigure}
\hfill
\begin{subfigure}[t]{0.3\textwidth}
    \includegraphics[width=0.7\linewidth]{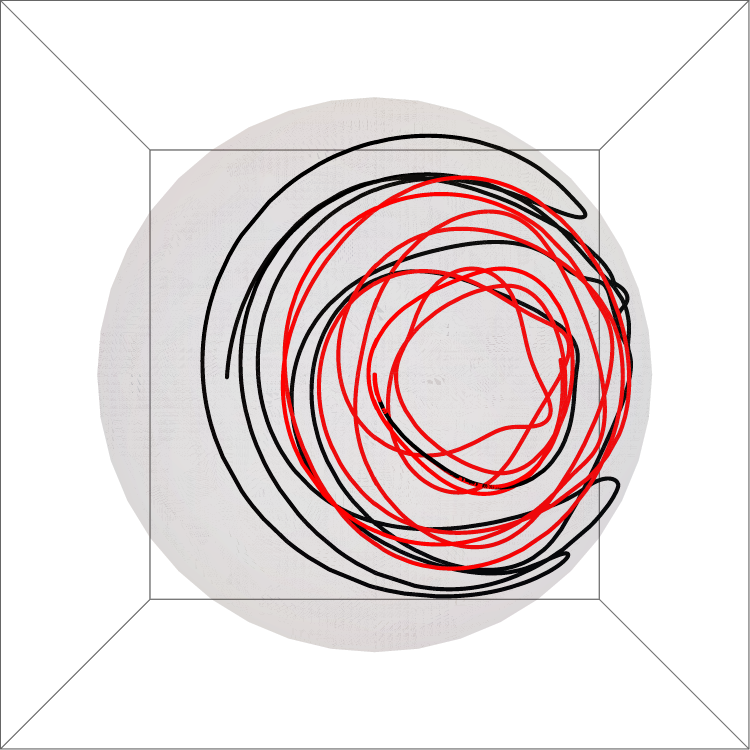}\hfill
  \includegraphics[width=0.9\linewidth]{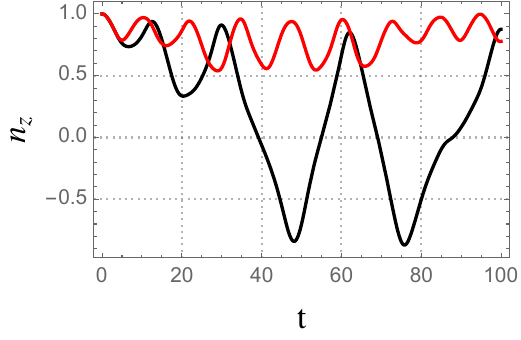}\hfill
\caption{Time evolution of a separable state $\ket{\psi(0)}=\ket{00}$.}
\label{fig:signalling-separable}
\end{subfigure}
\caption{\textbf{Signalling through a local intervention}. All plots: $a_2,b_1,b_2,d=0.1$, $c=0.4$, $g=1$. Red plot: $a_1=0.3$, black plot: $a_1=0.2$. First row: evolution of a Bloch vector in the Bloch ball. Second row: evolution of the $z$-component of Bloch vector in function of time.
}
\label{fig:signalling}
\end{figure}

Furthermore, the dynamics \eqref{GP_qubits} turns out to be exponentially sensitive not only to the initial state, but also to the parameters of local Hamiltonians. We illustrate this in Fig. \ref{fig:c}. In Figs. \ref{fig:c1} and \ref{fig:c2} we compare the trajectories in the Bloch ball starting with the same initial state, but driven by dynamics with a slightly different parameter $c$. While initially close, the trajectories are eventually driven apart. In order to quantify this effect we extend the phase space  from the Bloch ball $B_3$ to a cylinder $B_3 \times [c_1,c_2]$ and equip it with the Euclidean distance
    \begin{equation}
   D(t) = \sqrt{ \left[n_x(t) - n_x'(t)\right]^2 + \left[n_y(t) - n_y'(t)\right]^2 + \left[n_z(t) - n_z'(t)\right]^2 + \left[c - c'\right]^2}.
   \label{lyapunov-distance-c}
\end{equation}
This allows us to detect chaos in dynamics \eqref{GP_qubits} with respect to a change in the parameters of $H_A$. The plot in Fig. \ref{fig:c3} provides evidence that the corresponding Lyapunov exponent is positive. On the operational side, it implies that even a minute change in Alice's local Hamiltonian eventually results in a sizable effect on Bob's. Consequently, the would-be superluminal communication protocol would be very efficient.

\begin{figure}[h]
\begin{subfigure}[t]{0.25\textwidth}
    \includegraphics[width=1.1\linewidth]{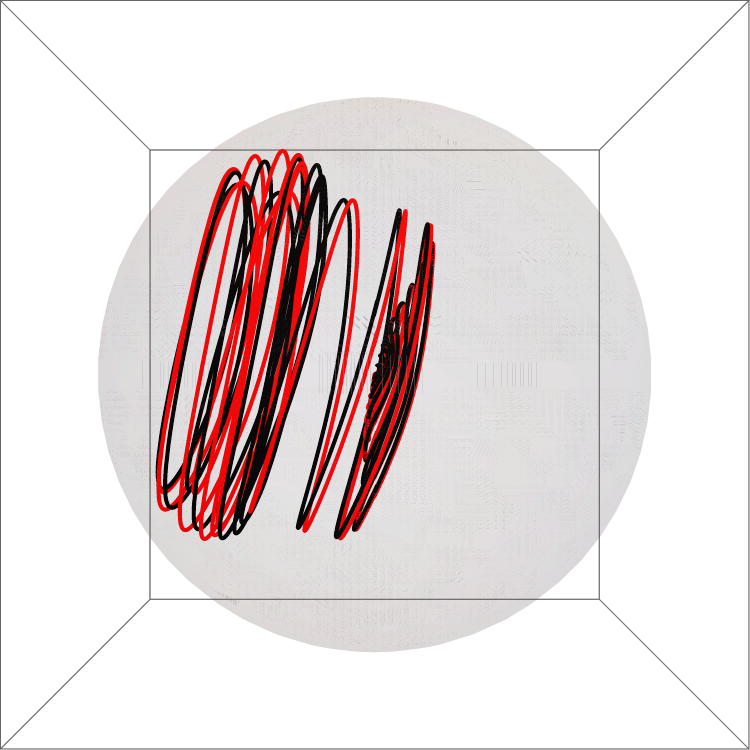}
    \caption{Time evolution over $t\in [0,30]$.}
    \label{fig:c1}
\end{subfigure}
\hfill
\begin{subfigure}[t]{0.25\textwidth}
    \includegraphics[width=1.1\linewidth]{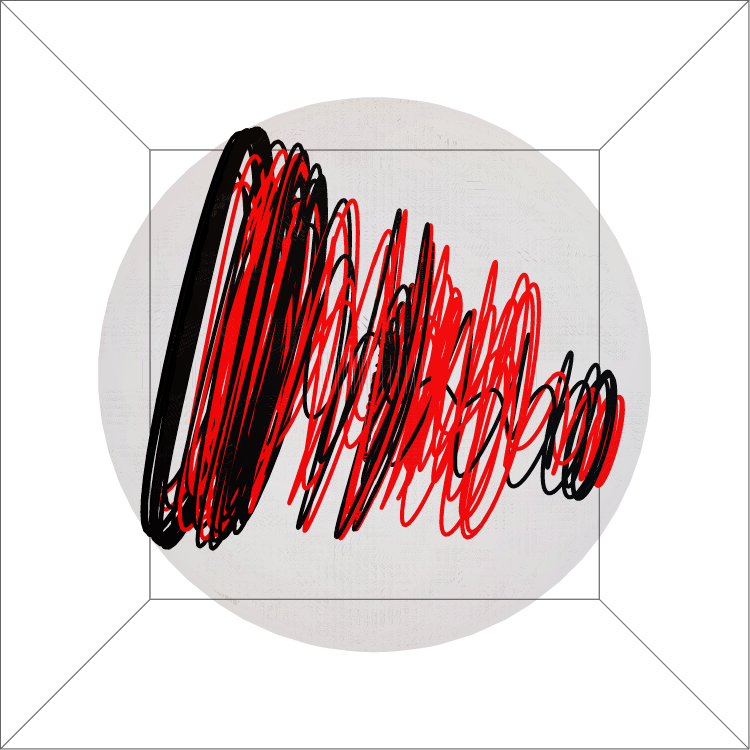}
    \caption{Time evolution over $t\in [0,100]$.}
    \label{fig:c2}
\end{subfigure}
\hfill
\begin{subfigure}[t]{0.4\textwidth}
    \includegraphics[width=0.9\linewidth]{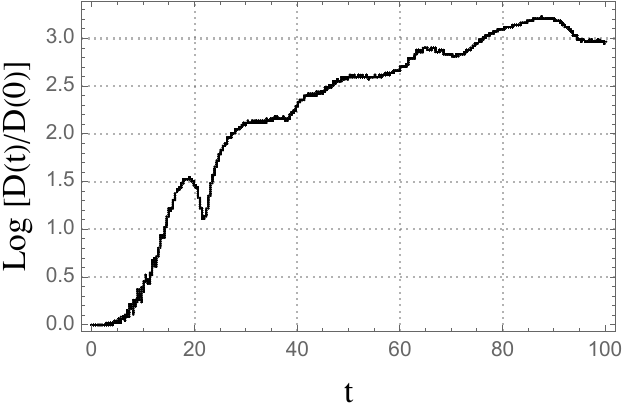}\hfill
    \caption{Logarithm of the Euclidean distance between the trajectories on a cylinder.}
    \label{fig:c3}
\end{subfigure}
\caption{\textbf{Illustration of exponential sensitivity of dynamics \eqref{GP_qubits} on parameters of $H_A$.} Both plots: $a_1,b_1,g=1$, $a_2=0.5$, and $b_2,d=2$. Black plot: $c=2$; red plot: $c=2 + \epsilon$, with $\epsilon = 0.001$. The source state was the Bell state $\ket{\psi(0)}=(\ket{00} + \ket{11})/\sqrt{2}$.}\label{fig:c}
\end{figure}

\section{\label{sec:no}Nonsignalling nonlinear nonlocal dynamics}
In the previous section we have seen that for generic parameters the dynamics \eqref{GP_qubits} facilitates operational superluminal signalling, even without the collapse postulate. It strongly suggests that general nonlinear dynamics \eqref{G-P-system2} suffers from this problem.  We now show, however, that there exists a special class of dynamics \eqref{G-P-system2}, which are consistent with the no-signalling principle.

The nonlinearity, which we consider in this paper \eqref{general-nonlinearity} distinguishes a specific basis of the Hilbert space $\h$. From formula \eqref{G-P-system2} it is clear that the mixing between different components of the state vector, expanded in this basis, is induced by the local Hamiltonians. Consequently, Proposition \ref{prop:sol} shows that if both $H_A$ and $H_B$ are diagonal in the basis distinguished by nonlinearity, then Eqs.  \eqref{G-P-system2} decouple and admit an analytic solution. This leads to the following result.

\begin{prop}\label{prop:NS_i} 
    If both local Hamiltonians $H_A$, $H_B$ are diagonal in the basis singled out by the nonlinearity, then Bob's reduced density operator \eqref{eq:partial-trace} does not depend on the parameters of Alice's local hamiltonian $H_A$, and vice versa.
\end{prop}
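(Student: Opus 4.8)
The plan is to leverage the explicit analytic solution furnished by Proposition \ref{prop:sol}, which applies precisely in the regime of the statement, namely when both $H_A$ and $H_B$ are diagonal in the preferred basis. First I would invoke \eqref{sol} to write each amplitude as $\alpha_{jk}(t) = \aleph_{jk}^0\, e^{i\varphi_{jk}(t)}$, with moduli $\aleph_{jk}^0$ fixed by the initial data and phases $\varphi_{jk}(t) = -\big(a_j + b_k + f_{jk}(\aleph_{jk}^0)\big)\,t + \varphi_{jk}^0$, where $a_j = \bra{j}H_A\ket{j}$ and $b_k = \bra{k}H_B\ket{k}$ as in \eqref{X}. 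The crucial structural feature to extract here is that Alice's parameters $\{a_j\}$ enter only through the phase, additively, and tied solely to the index $j$.

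Next I would write out the matrix elements of Bob's reduced state via the partial-trace formula \eqref{eq:partial-trace}, namely $\bra{k}\rho_B(t)\ket{k'} = \sum_j \alpha_{jk}(t)\,\overline{\alpha_{jk'}(t)}$. Inserting the solution gives $\bra{k}\rho_B(t)\ket{k'} = \sum_j \aleph_{jk}^0 \aleph_{jk'}^0\, e^{i(\varphi_{jk}(t) - \varphi_{jk'}(t))}$, so everything hinges on the phase difference $\varphi_{jk}(t) - \varphi_{jk'}(t)$. The key step --- and really the entire content of the proposition --- is the observation that in this difference the two contributions $-a_j t$ cancel exactly, since they carry the same index $j$ with identical coefficient $a_j$. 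What survives depends only on $b_k - b_{k'}$, on $f_{jk}(\aleph_{jk}^0) - f_{jk'}(\aleph_{jk'}^0)$, and on $\varphi_{jk}^0 - \varphi_{jk'}^0$, none of which involve $\{a_j\}$.

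I would then conclude that every matrix element of $\rho_B(t)$, diagonal and off-diagonal alike, is manifestly independent of the parameters of $H_A$; the diagonal entries $\bra{k}\rho_B(t)\ket{k} = \sum_j (\aleph_{jk}^0)^2$ are in fact constant in time, consistent with Proposition \ref{prop:diag}. Swapping the roles of $A$ and $B$ --- equivalently, summing the partial trace over the Bob index $k$ instead of the Alice index $j$, so that now the $b_k$ terms cancel --- yields verbatim the symmetric statement that $\rho_A(t)$ is independent of $H_B$.

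I do not anticipate a genuine obstacle: once the decoupled analytic solution \eqref{sol} is in hand, the result is essentially a one-line cancellation. The only point requiring a little care is to stress that the cancellation is specific to off-diagonal elements summed over the \emph{shared} Alice index $j$ while the \emph{free} Bob indices $k, k'$ differ; it is precisely the diagonality of $H_A$ --- and hence the absence in \eqref{G-P-system2} of any index-mixing beyond that induced by off-diagonal Hamiltonian entries --- that guarantees each $a_j$ multiplies a term labelled by a single fixed $j$ and therefore drops out of the phase difference.
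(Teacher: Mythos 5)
Your proposal is correct and follows essentially the same route as the paper: it invokes the analytic solution \eqref{sol} from Proposition \ref{prop:sol}, computes $\bra{k}\rho_B(t)\ket{k'} = \sum_j \alpha_{jk}(t)\overline{\alpha_{jk'}(t)}$, and observes that the $a_j$ contribution cancels in the phase difference since $X_{jk}-X_{jk'} = b_k - b_{k'}$. The paper's equation \eqref{rhoB_diag} is exactly the expression you derive, and the symmetric argument for $\rho_A$ is handled the same way.
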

\begin{proof}
    Recall from Proposition \ref{prop:sol} that for diagonal Hamiltonians $H_A$ and $H_B$ the system of equations \eqref{G-P-system2} has a unique analytic solution of the form \eqref{sol}. We can thus compute analytically Bob's local state \eqref{eq:partial-trace}. Concretely, 
\begin{equation}\label{rhoB_diag}
    \bra{k} \rho_B(t) \ket{k'} = \sum_{j} \alpha_{jk}(t) \overline{\alpha_{jk'}(t)} = \sum_{j} \aleph^0_{jk} \aleph^0_{jk'} e^{-i \big[ b_k-b_{k'}+f_{jk} (\aleph^0_{jk})-f_{jk'}(\aleph^0_{jk'}) \big]t + i \big( \varphi^0_{jk} - \varphi^0_{jk'} \big)},
\end{equation}
where the notation is as in \eqref{X} and \eqref{sol}. We thus see that $\rho_B(t)$ does not depend on $a_j$'s. Analogously, one can show that $\rho_A(t)$ does not depend on $b_k$'s.
\end{proof}
We thus see that if both $H_A$ and $H_B$ are diagonal in the preferred basis, then the dynamics \eqref{G-P-system2} cannot facilitate superluminal signalling through local interventions. Let us note that even though the linear operators $H_A \otimes \bone$, $\bone \otimes H_B$ and the nonlinear operator $K$ defined in \eqref{general-nonlinearity} have a common basis, they \emph{do not} commute. Indeed, with $\ket{\psi} = \sum_{j,k} \alpha_{jk} \ket{jk}$ we have
\begin{align}
 & (H_A \otimes \bone) K(\ket{\psi}) = \sum_{j,k} f_{jk} ( |\alpha_{jk}| ) \alpha_{jk} a_j \ket{j,k}  \neq  K\big( (H_A \otimes \bone) \psi \big) =  \sum_{j,k} f_{jk} ( |\alpha_{jk} a_j| ) \alpha_{jk} a_j \ket{j,k}.
\end{align}

On the other hand, if we assume the (von Neumann) collapse postulate, then Alice can, in general, signal to Bob (and vice versa) through a local measurement. However, if we restrict the possible observables that Alice and Bob can choose to measure, then this conclusion can be avoided.

\begin{prop}\label{prop:NS_m} 
    Assume that the local Hamiltonian $H_B$ is diagonal in the basis singled out by the nonlinearity and let $M\in \B(\h_A)$ be an observable. Then, the diagonal elements of Bob's local density matrix, $\bra{k} \rho_{B}(t) \ket{k}$ are constant and do not depend on whether $M$ was measured at some time $t_0 \leq t$ or not. Symmetrically, the same holds with $A$ and $B$ exchanged.
\end{prop}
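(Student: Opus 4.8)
The plan is to reduce the statement to Proposition \ref{prop:diag}, which already guarantees that the $B$-diagonal of the reduced state is frozen under the global dynamics whenever $H_B$ is diagonal in the preferred basis. The first assertion --- that $\bra{k}\rho_B(t)\ket{k}$ is constant in time --- is then immediate: applying Proposition \ref{prop:diag} to the unmeasured evolution $\ket{\psi(t)}$ gives $\bra{k}\rho_B(t)\ket{k} = \bra{k}\rho_B(0)\ket{k}$ for every $k$ and every $t$. The substantive content is therefore the measurement-independence, and I would establish it by comparing the unmeasured reduced state $\rho_B(t) = \Tr_A\ket{\psi(t)}\bra{\psi(t)}$ with the post-measurement ensemble $\rho_B'(t) = \sum_x p_x \Tr_A\ket{\xi_x(t)}\bra{\xi_x(t)}$, where $\ket{\xi_x(t_0)}$ are the branches produced by the von Neumann collapse at $t_0$.

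The first key step is to show that the two states coincide \emph{at the instant of measurement}, $\rho_B'(t_0) = \rho_B(t_0)$. Writing $M = \sum_x m_x P_x$ with orthogonal projectors $P_x$ on $\h_A$, the averaged post-measurement state of the full system is $(\Phi \otimes \mathrm{id}_B)\big(\ket{\psi(t_0)}\bra{\psi(t_0)}\big)$, where $\Phi(\cdot) = \sum_x P_x (\cdot) P_x$ is the pinching channel acting on $A$ alone. Since $\Phi$ is trace-preserving, tracing out $A$ yields $\rho_B'(t_0) = \Tr_A\ket{\psi(t_0)}\bra{\psi(t_0)} = \rho_B(t_0)$ --- the ordinary no-signalling fact of \emph{linear} quantum theory, that a local operation on $A$ leaves $B$'s reduced state untouched. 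Note that this equality is for the full reduced matrix, not merely its diagonal.

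The second key step is that I \emph{cannot} simply propagate $\rho_B'(t_0)$ forward as a single density matrix, because dynamics \eqref{G-P-system2} is nonlinear and does not respect convex mixtures. Instead, I would keep the branches separate: each pure state $\ket{\xi_x(t_0)}$ evolves under the same global dynamics with diagonal $H_B$, so Proposition \ref{prop:diag} applies to it \emph{individually} and freezes its $B$-diagonal, $\bra{k}\Tr_A\ket{\xi_x(t)}\bra{\xi_x(t)}\ket{k} = \bra{k}\Tr_A\ket{\xi_x(t_0)}\bra{\xi_x(t_0)}\ket{k}$. Extracting the diagonal of $\rho_B'(t)$ commutes with the classical (linear) average over $x$, so $\bra{k}\rho_B'(t)\ket{k} = \sum_x p_x \bra{k}\Tr_A\ket{\xi_x(t_0)}\bra{\xi_x(t_0)}\ket{k} = \bra{k}\rho_B'(t_0)\ket{k}$. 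Chaining the two steps with Proposition \ref{prop:diag} for the unmeasured branch gives $\bra{k}\rho_B'(t)\ket{k} = \bra{k}\rho_B'(t_0)\ket{k} = \bra{k}\rho_B(t_0)\ket{k} = \bra{k}\rho_B(t)\ket{k}$, which is the claim; the symmetric statement follows verbatim after exchanging $A$ and $B$.

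The main obstacle is precisely the point where nonlinearity forbids the naive move: in linear quantum mechanics, measurement-plus-averaging is a local channel on $A$ and hence leaves \emph{all} of $\rho_B(t)$ invariant for every $t$, but here that argument is valid only at the single instant $t_0$. The per-branch application of Proposition \ref{prop:diag} is what rescues the diagonal at later times, and it is essential that diagonal-preservation be a statement about each individual pure trajectory rather than about the mixed state. This also makes transparent why only the diagonal survives: the off-diagonal elements of the branches pick up state-dependent phases (cf.\ the analytic form \eqref{sol}) that do not recombine coherently, so $\rho_B'(t)$ and $\rho_B(t)$ generically differ off the diagonal --- which is exactly why Alice and Bob must be restricted to measuring in the preferred basis for the no-signalling conclusion to hold.
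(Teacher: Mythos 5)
Your proposal is correct and follows essentially the same route as the paper's own proof: establish $\rho_B'(t_0)=\rho_B(t_0)$ from the linear no-signalling property of the averaged projection at the instant of measurement, then apply Proposition \ref{prop:diag} to each post-measurement branch $\ket{\xi_x(t)}$ individually (since the nonlinear dynamics does not respect convex mixtures) and use linearity of the diagonal extraction over the classical average. Your added remarks on the pinching channel and on why only the diagonal survives are consistent elaborations of the same argument.
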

\begin{proof}
This follows immediately from Proposition \ref{prop:diag}. Indeed, let $\ket{\psi(0)}$ be the state prepared at the source and let us consider the spectral decomposition of the measured observable, $M = \sum_r m_r \P^r$. According the von Neumann collapse postulate, immediately after the measurement at time $t_0 \geq 0$ the state of the system is projected to $\ket{\xi^r(t_0)} = \tfrac{1}{\sqrt{p_r}} (\P^r \otimes \bone) \ket{\psi(t_0)}$ with probability $p_r = \bra{\psi(t_0)} (\P^r \otimes \bone) \ket{\psi(t_0)}$. Because the standard projection postulate is consistent with the no-signalling principle, Bob's density matrix does not feel Alice's measurement, 
\begin{align}
    \rho_B'(t_0) = \sum_r p_r \rho_B^r(t_0) = \sum_r p_r \Tr_A \ket{\xi^r(t_0)} \bra{\xi^r(t_0)} = \Tr_A \ket{\psi(t_0)} \bra{\psi(t_0)} = \rho_B(t_0).
\end{align}
Now, Proposition \ref{prop:diag} implies that for any $t \geq t_0$, 
\begin{equation}
\begin{aligned}
    \bra{k} \rho_B'(t) \ket{k} & = \sum_r p_r \bra{k} \rho_B^r(t) \ket{k} = \sum_r p_r \bra{k} \rho_B^r(t_0) \ket{k} \\
    & = \bra{k} \rho_B'(t_0) \ket{k} = \bra{k} \rho_B(t_0) \ket{k} = \bra{k} \rho_B(t) \ket{k}.
\end{aligned}
\end{equation}
The same reasoning applies with $A$ and $B$ exchanged.
\end{proof}

Proposition \eqref{prop:NS_m} shows that if $H_B$ is diagonal in the preferred basis and Bob can only measure observables of the form
\begin{align}
 M_B = \sum_k m_k^B \ket{k} \bra{k}, 
\end{align}
which commute with the preferred basis $\{ \ket{k} \}$ of $\h_B$, then Alice cannot signal to him neither using local interventions nor projective measurements. 

Let us stress that if Bob would be able to measure an observable, which is not aligned with the preferred basis, then he could statistically infer Alice's manipulations, whether she acted through local interventions or projective measurements. In Fig. \ref{fig:dzero-states} we illustrate this fact for the 2-qubit system evolving according to Eqs. \eqref{GP_qubits} with $d=0$. 

\begin{SCfigure}[1.5][h]
        \caption{\textbf{Evolution of $\rho_B(t)$ for a diagonal Hamiltonian $H_B$.} Both plots: $a_2=1$, $b_1=2$, $b_2=1$, $c=2$, $d=0$, $g=3$ for an initial state $\ket{\psi(0)}=(\ket{00}+\ket{11})/\sqrt{2}$.
        Red plot: $a_1=2$; black plot: $a_1=1$. In accordance with Proposition \ref{prop:diag} the trajectory of $\rho_B(t)$ is constrained to the $z=0$ plane (yellow disk).}
         \includegraphics[width=0.25\textwidth]{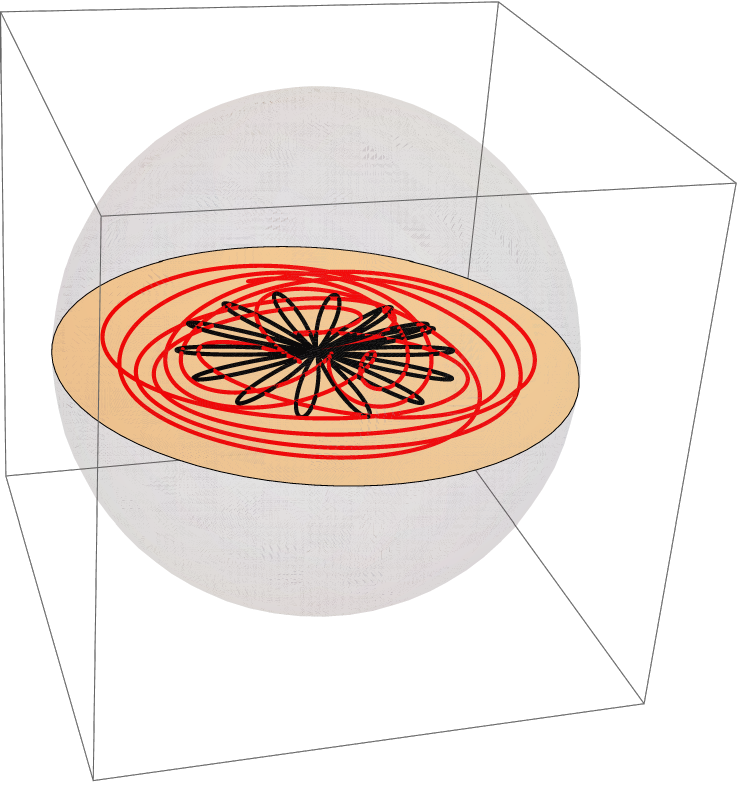}
         \label{fig:initial-conditions-sphere}
        \label{fig:dzero-states}
\end{SCfigure}

In summary, we have shown that --- with the assumptions (1a--1d) and (2) --- the nonlocal nonlinear dynamics of the form \eqref{G-P-system2} may be consistent with the no-signalling principle (3a) only if at least one of the local Hamiltonians is diagonal in the basis singled out by nonlinearity. Even then, the demand of no-signalling requires an extreme violation of the no-restriction hypothesis: We have to assume that the only operationally measurable observables are the ones, which commute with the preferred basis.

\section{\label{sec:conc}Conclusions}

Gisin's no-go theorem \cite{Gisin1989} was established 35 years ago and it is widely considered to undermine deterministic nonlinear quantum dynamics as a valid physical model (see e.g. \cite{OR}). However, this result is founded on some prerequisite assumptions, which could and have been questioned in several models \cite{Czachor1998,CzachorMarcin,Czachor02,Kent05,Helou17,Caban20,Caban21,Kaplan2022,Paterek24}. It is thus of high importance to understand what are the indispensable assumptions underlying the argument and what are its limits of applicability.

In this work we scrutinised Gisin's theorem in its version presented with Simon and Bu\v{z}ek in \cite{Gisin2001}. We conclude that it is based on four main premises, as discussed in detail in Section \ref{sec:2}. We unveiled all possible `loopholes', that one can exploit to construct models of deterministic nonlinear quantum dynamics, which may be consistent with the no-signalling principle.

In particular, we observed that the argument uses the no-restriction hypothesis, so one could explore the possibility of nonlinear dynamics of states within the frame of General Probabilistic Theories \cite{GPT_review}. One of the interesting options would be to consider models of dynamics in the context  of the so-called \emph{complete extension} \cite{CEP}: A state $s_{AE}$ is an extension of a state $s_A$ if its reduced state (as defined in the considered theory) is $s_A$. It is  called a complete extension, if any other extension can be generated from it by a local action on $E$ --- in analogy to the action of a local unitary on a global pure quantum state. Complete extensions can be seen as minimal variants of purifications in quantum theory and a General Probabilistic Theory satisfies the Complete Extension Postulate (CEP) if it only allows for complete extensions. Now, it seems possible that theories  which do not satisfy CEP can admit interesting nonlinear dynamics which are no-signaling and have nontrivial extensions, in some analogy to the Steinspring theorem in quantum mechanics. Another option would be to search for theories with CEP which, however, do not ``hyperdecohere'' to quantum mechanics, unlike quantum mechanics which decoheres to classical mechanics. The existence of such theories is not a priori excluded \cite{CEP}.

Another unveiled loophole concerns the assumed locality of the dynamics. Concretely, the theorem in \cite{Gisin2001} says that, given a Hilbert space $\h = \otimes_n \h_n$, any map $\E: \h_n \to S(\h_n)$ must be linear and completely positive, but it does not say anything about the possible maps $\overline{\E}: S(\h_n) \to S(\h_n)$,  $\widetilde{\E}: \h \to \h$ or $\widetilde{\E}’: \h \to S(\h)$. In  \cite{Caban20,Caban21} and \cite{Paterek24} the authors presented explicit examples of nonlinear, yet non-signalling, dynamical maps of the form $\overline{\E}$. In this work we studied some simple class of nonlocal maps of the type $\widetilde{\E}: \h_{A} \otimes \h_B \to \h_{A} \otimes \h_B$. These could be seen as a toy model for more elaborate physical models, such as the one presented in \cite{Kaplan2022}. The latter model involves nonlinearities resulting from expectation values of quantum-field-theoretic operators, hence it is inherently nonlocal, and singles out the `position' as a preferred observable.

A general conclusion from our studies is that, if we keep the static structure of quantum mechanics (i.e. assumptions (1a--2)), then in order to make nonlinear dynamics of the form \eqref{eq-gen2} consistent with the no-signalling principle one needs to give up the no-restriction hypothesis anyway. It is important to stress that the transfer of information between the spacelike separated parties does not necessarily require projective measurements, as we showed in Sec. \ref{sec:inter}. It means that even in theories, which do not adopt the collapse postulate, like e.g. \cite{Kaplan2022}, one needs to carefully check the impact of local interventions, through e.g. auxiliary fields, on the dynamic of global quantum states.

The main lesson from Gisin's theorem, supported by this work, is that \cite{Gisin2001}: ``It is clearly difficult to modify just parts of the whole structure'' of quantum mechanics. The major problem in nonlinear quantum dynamics is how to extend a nonlinear map from a single system to composite systems in a consistent way. This problem concerns also the class of dynamical maps \eqref{eq-gen2} considered here. Indeed, it is unclear how to extend the dynamics \eqref{eq-gen2} to a map on $\h_A \otimes \h_{A'} \otimes \h_B \otimes \h_{B'}$, so that it involves the agents' local environments. One could, however, argue in favour of the general form of global dynamics \eqref{eq-gen}, which determines the effective dynamics of the subsystems' states --- somewhat in the spirit of \cite{Kaplan2022}.

It seems that a categorical rejection of physical models based on deterministic nonlinear quantum dynamics might be premature. Indeed, recently there is a growing interest in such models, which can yield concrete testable predictions \cite{Caban23,KaplanPRL1,KaplanPRL2}. Finally, let us also point out that the no-signalling principle should not be treated as an unbreakable rule for constructing physical models, as it is in fact violated in nonrelativistic quantum dynamics \cite{PRA2020}. One can thus envisage \emph{effective} nonrelativistic models involving deterministic nonlinear dynamics of quantum states, as long as one can show that any superluminal signalling effects are irrelevant within the physical regime, in which the model is expected to apply.

\section*{Acknowledgements}
ME would like to thank Karol {\.Z}yczkowski for the enlightening discussions on the detection of chaos.  
This work was supported by the National Science Centre in Poland under the research grant Maestro (2021/42/A/ST2/0035). PH acknowledges support by the Foundation for Polish Science through IRAP project co-financed by EU within Smart Growth Operational Programme (contract no. 2018/MAB/5).

\appendix
\section{Methods. Analysis of Chaotic Dynamics}
\label{app:lyapunov}
In the main part of the text we have argued that the dynamical system described by equations \eqref{GP_qubits} exhibits chaotic behaviour. 
A central tool to detect chaos in dynamical systems are the \emph{Lyapunov exponents}. Concretely, if $D(t)$ is a distance between two trajectories in function of time, then the maximal Lyapunov exponent is defined as
\begin{equation}
    \lambda = \lim_{t \to \infty} \lim_{D(0) \to 0} \frac{1}{t} \log\left(\frac{D(t)}{D(0)}\right).
    \label{def:exponent-lyapunov}
\end{equation}
If $\lambda > 0$ then $D(t) \approx e^{\lambda t} D(0)$, so that two initially close trajectories are driven away from each other exponentially fast. 

In quantum mechanics one can study the distance between trajectories in the space of quantum states \cite{Karol93,ChaosQM}. For a single-qubit effective dynamics, which we studied in Section \ref{sec:sign}, the trajectories live in the Bloch ball and for $D(t)$ we take the Euclidean distance \eqref{lyapunov-distance}. Such an approach provides evidence for the exponential sensitivity of the trajectory on the initial state. In order to study the sensitivity of trajectories on parameters of the dynamics, rather than on the initial state, we need to treat the desired parameter as a variable and consider the distance in the extended space \eqref{lyapunov-distance-c}.

A qualitative study of Lyapunov exponents in the system \eqref{GP_qubits} is beyond the scope of this paper. Nevertheless, we can estimate the maximal Lyapunov exponent from the numerical studies and treat it as a signature of chaos (cf. \cite{ChaosQM94,QChaos20}).

In order to illustrate the technique, we analyse the plot presented in Fig. \ref{fig:exponent}. 
We see that, on the average, the plotted function grows and then flattens out. The approximately linear growth implies an exponential growth of the distance between the two trajectories. The final flattening occurs because the phase-space, that is the Bloch ball, is compact and hence the maximal distance between the trajectories is bounded from above. 

\begin{figure}[!htb]    \includegraphics[width=0.6\textwidth]{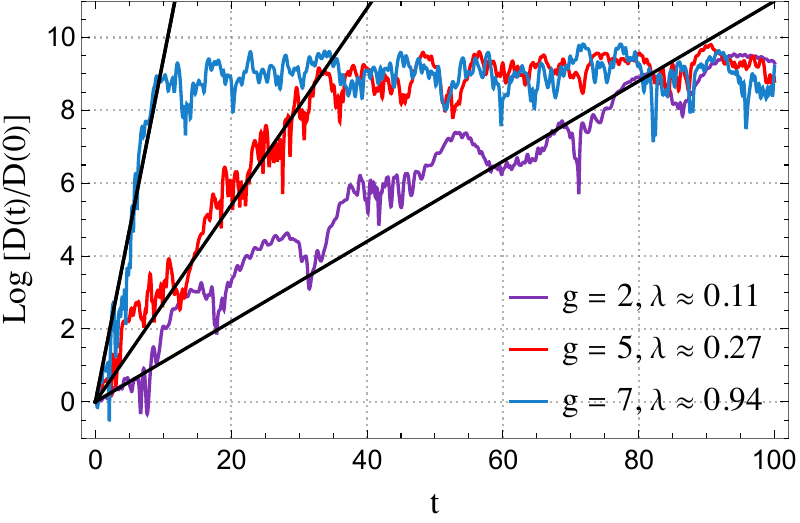}
\caption{\textbf{Estimates of the Lyapunov exponents with a linear regression fit.} Linear regression for plots on Fig. \ref{fig:exponent}.}
    \label{Exponents}
\end{figure}

In order to estimate the maximal Lyapunov exponent we identified a range of $t$ for which the distance between two trajectories -- original one, and varied by a small factor $\varepsilon$, increases exponentially, as $t \in [0,t_{max}]$. Then we used linear regression $\displaystyle y = a t + b$ with $b=0$ to obtain an equation $\displaystyle y = a t$ with the coefficient of variation $CV=\sigma/\mu$, where $\sigma$ is the standard deviation of the dataset and $\mu$ is the mean of the dataset. From this, we estimate the Lyapunov exponent to be $\lambda = a$. 
Further, we noted that small changes in the choice of the range of $t$ for which we used linear regression does not significantly affect values of the Lyapunov exponent and the coefficient of variation -- a change of range of $t$ by $5\%$ results in change of $\lambda$ by $\delta \lambda$ and change in $CV$ by $\delta CV$. The results for Fig. \ref{Exponents} are presented in the Tab. \ref{tab}. 

Thus, given that $\lambda > 0$ and given the value of $CV$, these particular system (i.e. a system with this specific dynamics, with certain parameters varied by a given factor $\epsilon$, which affects the system only in one direction on the Bloch sphere) exhibits chaotic behavior. 

\begingroup
\setlength{\tabcolsep}{10pt} 
\renewcommand{\arraystretch}{1.5} 
\begin{table}[!htb]
\centering
\begin{tabular}{ ||c|c|c|c||}
\hline
$g$ & $t \in [0, t_{max}]$ & $\lambda \pm \delta \lambda$ & $CV$ $\pm \delta CV$ \\
\hline
2 & $t\in [0,95]$ & 0.1096 $\pm$ 0.003 & 0.2119 $\pm$ 0.005 \\
5 & $t\in [0,35]$ & 0.2748 $\pm$ 0.009 & 0.0669 $\pm$ 0.011 \\
7 & $t\in [0,10]$ & 0.9395 $\pm$ 0.032 & 0.1271 $\pm$ 0.032 \\
\hline
\end{tabular}
\caption{\textbf{Lyapunov exponents for different values of $g$} corresponding to Fig. \ref{Exponents}. The system exhibits chaotic behaviour in range $t\in[0,t_{max}]$; $\lambda$ is a Lyapunov exponent and $CV$ is a coefficient of variation. Errors were calculated as average changes in values of $\lambda$ and $CV$ in fits with $t_{max} \pm 5\% \cdot t_{max}$.}
\label{tab}
\end{table}
\endgroup

Generalising this to other systems of a certain type -- for example those, which evolution is perturbated in another direction, or even all systems with a given non-linear potential -- requires more thorough analysis, that accounts for changes in seven free variables. This analysis is beyond the scope of our considerations in this paper. 

\newpage
\bibliography{HAB}{}
\bibliographystyle{plain}

\end{document}